\documentclass[letterpaper, journal]{IEEEtran}

\let\solution=  \iffalse %\iftrue%\iffalse%\iftrue%\iffalse

\usepackage{amssymb}
\usepackage{amsfonts}
\usepackage{array}      
\usepackage{amsmath}
\usepackage{amsthm}
\usepackage{graphicx}
\usepackage[usenames,dvipsnames]{color}
\usepackage{cite}
\usepackage{url}

\pdfminorversion=4

\newtheorem{theorem}{Theorem}
\newtheorem{definition}{Definition}
\newtheorem{lemma}{Lemma}
\newtheorem{corollary}{Corollary}

\def\Pcppp{P_{\rm c}^{\rm PPP}}
\def\Poppp{P_{\rm o}^{\rm PPP}}
\def\pt{p_{\rm t}}

\def\Pc{P_{\rm c}}

\def\lambdap{\lambda_{\rm p}}
\def\lambdab{\lambda_{\rm b}}
\def\rc{r_{\rm c}}
\def\rh{r_{\rm h}}
\def\Phip{\Phi_{\rm p}}
\def\Phib{\Phi_{\rm b}}
\def\PhibI{\Phi_{{\rm b}_1}}
\def\PhibII{\Phi_{{\rm b}_2}}
\def\PhibIII{\Phi_{{\rm b}_3}}

\def\Iy{I(\Phi^{\zeta}_o)}       %{I_y}      %I^{!\zeta}({y},\Phi^{\zeta}_o)
\def\Izn{\hat{I}^{!\zeta}_n}
\def\IzII{\hat{I}(\Phi^{\zeta})}                         %{\hat{I}^{!\zeta}_2}
     %I^{!\zeta}({\|\zeta\|},\Phi^{\zeta}_o)
   %\|\zeta\|
\def\np{{\rm NP}}
\def\Ks{K}

\DeclareMathOperator*{\argmin}{arg\,min}

\def\mh#1{{\iffalse\bf\textcolor{blue}{[ #1 ]}\fi}}

\newlength{\figwidth}

\solution % one column
\setlength{\figwidth}{9.5cm}  %0.72\textwidth}
\else
\setlength{\figwidth}{\columnwidth}
\fi

\ifCLASSINFOpdf
\else
\fi

\hyphenation{op-tical net-works semi-conduc-tor}

\begin{document}
\title{Asymptotic Deployment Gain:\\A Simple Approach to Characterize  the SINR Distribution in General Cellular Networks}

\author{\IEEEauthorblockN{Anjin Guo and Martin Haenggi\\
Dept.~of Electrical Engineering\\
University of Notre Dame\\}\thanks{ 
The support of the NSF (grants CNS 1016742 and CCF 1216407) is gratefully acknowledged.}
}

\maketitle

\begin{abstract}
In cellular network models, the base stations are usually assumed to form a lattice or a Poisson point process (PPP). In reality,
however, they are deployed neither fully regularly nor completely randomly. Accordingly, in this paper, we consider the very
general class of motion-invariant models and analyze the behavior of the outage probability (the probability that the
signal-to-interference-plus-noise-ratio (SINR) is smaller than a threshold) as the threshold goes to zero. We show that,
remarkably, the slope of the outage probability (in dB) as a function of the threshold (also in dB) is the same for essentially all motion-invariant point processes. The slope merely depends on the fading statistics.

Using this result, we introduce the notion of the \emph{asymptotic deployment gain} (ADG), which characterizes the horizontal gap between the success probabilities of the PPP and another point process in the high-reliability regime (where the success probability is near 1).  

To demonstrate the usefulness of the ADG for the characterization of the SINR distribution,
we investigate the outage probabilities  and the ADGs for different point processes and fading statistics by simulations.
\end{abstract}

\begin{IEEEkeywords}
Cellular networks, Stochastic geometry, Coverage, Interference, Signal-to-interference-plus-noise ratio. 
\end{IEEEkeywords}

\section{Introduction}
\subsection{Motivation}
The topology of the base stations (BSs) in cellular networks depends on many natural or man-made factors,
such as the landscape, topography, bodies of water, population densities, and traffic demands. Despite,
base stations were usually modeled deterministically as triangular or square lattices until recently, when it was
shown in \cite{Tract} that a completely irregular point process, the Poisson point process (PPP) \cite{book,JSAC09}, may be
used without  loss in accuracy but significant gain in analytical tractability. Real deployments fall somewhere
in between these two extremes of full regularity (the triangular lattice) and complete randomness (the PPP),
as investigated in  \cite{AGTWC13} using base station data from the UK. They exhibit some degree of
repulsion between the BSs, since the operators do not place them closely together. Such repulsion
can be modeled using point processes with a hard minimum distance between BSs (hard-core processes)
or a high likelihood that BSs are a certain distance apart (soft-core processes). At a larger scale, at the level of a state or country, BSs may appear clustered due to the high density of BSs in cities and low density in rural regions.  The analysis of such non-Poisson processes is, however,  significantly more difficult than the analysis of the PPP,
since dependencies exist between the locations of base stations.

The signal-to-interference-plus-noise ratio (SINR) distribution is one of the main performance metrics in cellular networks. In this paper, we mainly consider the \emph{success probability}, defined as the complementary cumulative distribution function of the SINR, i.e., $\Pc(\theta) \triangleq \mathbb{P}(\textrm{SINR} > \theta)$.
The success probability is the same as the \emph{coverage probability} in many other works (e.g., \cite{Tract}), but we use ``success" instead of ``coverage", because in the  cellular industry,
``coverage"  does not include small-scale fading, but is based on  only path loss and shadowing.
The success probability  depends on many factors, such as the fading, the path loss and the distribution of the BSs.
In \cite{Tract}, the authors derive the expressions for the success probability and the mean achievable rate for networks whose BSs form a homogeneous PPP. For general models, it is much harder to compute the success probability due to the
dependence between BS locations mentioned above, and we are not aware of any tractable analytical methods that
are applicable in general.
In this paper,
we propose a novel approach to evaluate the success probability  of  cellular networks, where BSs follow non-Poisson processes. 

\subsection{Related Work}
The spatial configuration of the BSs (or transmitters) plays a critical role in the performance evaluation of cellular networks (or general wireless networks), since the SINR critically relies on the distances between BSs and users (or transmitters and receivers).  
Network performance analysis using stochastic geometry have drawn considerable attention \cite{JSAC09,Tract, modelling0, PPPcellular0,KTier,DPPCluster,GantiCluster,NOW,NSCluster,tit11,High-Reliability, MIND,AGTWC13,nonPoisson1,nonPoisson2,PoissonNonCell1,PoissonNonCell2,NaTWC14, ModelCWN}.  
Recent related works can  be roughly divided into two categories. One is based on the assumption of modeling the BSs or access points as  Poisson-based point processes (e.g., the PPP and the Poisson cluster process) in cellular networks, e.g.~\cite{Tract,PPPcellular0,KTier,DPPCluster,NSCluster}. The other one is dealing with general point processes in non-cellular networks, especially in wireless ad hoc networks, e.g. \cite{NOW,tit11, High-Reliability,MIND}. 
Of course, there are some other types of works, such as the type of using the Poisson-based point processes in non-cellular networks, e.g. \cite{PoissonNonCell1,PoissonNonCell2, GantiCluster}, and the type of using non-Poisson point processes in non-cellular networks, e.g. \cite{nonPoisson1,nonPoisson2}, but they are not closely related  this paper. Our focus is applying general point processes to cellular networks, which has seldom been studied.

Regarding the first category, in cellular networks, the PPP is advantageous for modeling the BSs configuration \cite{Tract,PPPcellular0,KTier,DPPCluster},  due to its analytical tractability. 
Poisson-based processes, especially Poisson cluster processes, e.g. in \cite{NSCluster},  have been used to model the small cell tier in heterogeneous cellular networks, where the BS tier is still modeled as the PPP.  
Non-Poisson processes, such as hard-core processes, are less studied in cellular networks, due to the absence of an analytical form for the probability generating functional and the Palm characterization of the point process distribution. 
Exceptions are the related works in \cite{NaTWC14, ModelCWN, AGTWC13}.  %non-Poisson processes in cellular networks.  
In \cite{NaTWC14}, the authors applied the $\beta$-Ginibre point process, where points exhibit repulsion, in cellular networks. 
In \cite{ModelCWN}, the Geyer saturation process was used to model the real cellular  service site locations.
In \cite{AGTWC13}, we modeled BSs as  the Strauss process and the Poisson hard-core process,  and provided fitting methods using real BS data sets. 

As for the second category,  
general point processes have been used to model the transmitting nodes in non-cellular networks, see, e.g., \cite{tit11, High-Reliability,MIND}. %although not in cellular networks. %In \cite{??}, ...
In \cite{tit11}, the authors analyzed the success probability in an asymptotic regime where the density of interferers goes to $0$ in wireless networks with general fading and node distribution. 
The paper \cite{High-Reliability} provides an in-depth study of the outage probability of general ad hoc networks, where the nodes form an arbitrary motion-invariant point process, under Rayleigh fading as the density of interferers goes to $0$. 
In \cite{MIND}, the tail properties of interference for any motion-invariant spatial distribution of transmitting nodes were derived.

In this paper,  we consider a general class of point processes for modeling possible BS configurations. 
In homogeneous cellular networks, each user is usually serviced by its nearest BS, though not necessarily. 
When general point processes are applied in such networks, one of the main emerging difficulties  is that  the point process distribution conditioned on  an empty ball around the user is unknown.  Moreover, the empty space function has to be considered, resulting in the growth of the complexity.  
Tackling those difficulties directly  is seldom seen in the literature.

\subsection{Contributions}

In this paper, we provide an indirect approach to the success probability  analysis  of an
arbitrary motion-invariant (isotropic and stationary\footnote{Stationarity implies
that the success probability does not depend on the location of the typical user.}) point process \cite[Ch.~2]{book}
 by comparing its success probability to the success probability of the PPP. To validate this approach,
 we establish that {\em the outage probability $1-\Pc(\theta)$ of essentially
all motion-invariant (m.i.) point processes, expressed in dB, as a function of the SINR threshold $\theta$, also in dB, has the
same slope as $\theta\to 0$}. The slope depends on the fading statistics.
This result shows that asymptotically the success curves $\Pc(\theta)$
of all m.i.~models are just (horizontally) shifted versions of each other in a log-log plot,
and the shift can be quantified in terms of the horizontal difference $\hat G$ along the $\theta$ (in dB) axis. Since the success
probability of the PPP is known analytically, the PPP is a sensible choice as a reference model, which then allows
to express the success probability of an arbitrary m.i.~model as a gain relative to the PPP. This gain is called the
{\em asymptotic deployment gain} (ADG).

We introduced the concept of the \emph{deployment gain} (DG) in our previous work \cite{AGTWC13}. It measures how close a point process or a point set is to the PPP at a given target success probability. Here we extend the DG to include noise and then,
to obtain a quantity that does not depend on a target success probability, formally define its asymptotic counterpart---the ADG.

The paper makes the following contributions:
\begin{itemize}
\item We introduce the asymptotic deployment gain.
\item We formally prove its existence for a large class of m.i.~point processes.
\item We show how the asymptotic slope of the outage probability depends on the fading statistics.
\item We demonstrate through simulations how the ADG can be used to quantify the success probability of
several non-Poisson models, even if the SINR threshold $\theta$ is not small.
\end{itemize}

The rest of this paper is organized as follows. In Section \uppercase\expandafter{\romannumeral2}, we introduce the system model and the  ADG.
We investigate the existence of the ADG and study the asymptotic properties of the outage probability in Section \uppercase\expandafter{\romannumeral3}. 
Applications of the ADG are provided in Section \uppercase\expandafter{\romannumeral4}.
In Section \uppercase\expandafter{\romannumeral5}, we show simulation results for some specific network models. Conclusions are drawn in  Section \uppercase\expandafter{\romannumeral6}.

\section{System Model and Asymptotic Deployment Gain}
\subsection{System Model}
We consider a cellular network that consists of BSs and mobile users. The BSs are modeled as a general m.i.~point
process $\Phi$ of intensity $\lambda$ on the plane.
We assume that $\Phi$ is mixing  \cite[Def.~2.31]{book}, which implies that the second moment density  $\rho^{(2)}(x_1,x_2) \to \lambda^2$ as $\|x_1-x_2\|\to\infty$.  
Intuitively, $\rho^{(2)}(x_1,x_2)$ is the probability that there are two points of $\Phi$ at $x_1$ and $x_2$ in the infinitesimal volumes $dx_1$ and $dx_2$.  %distance $\|x\|$.
Rigorously, it is the density pertaining to the second factorial moment measure \cite[Def.~6.4]{book},
which is given by
\begin{equation}
\alpha^{(2)}(A \times B) = \mathbb{E}\Big( \sum_{x,y \in \Phi}^{\neq} \mathbf{1}_A(x)\mathbf{1}_B(y) \Big)=\!\!
\int\limits_{A \times B} \rho^{(2)}(y-x) dxdy,  \nonumber
\end{equation}
where  $A, B$ are two compact subsets of $\mathbb{R}^2$, and the $\neq$ symbol indicates that the sum is taken only over distinct point pairs.
Since the point processes considered are m.i.,   %we can use $\rho^{(2)}(y-x)$ instead of $\rho^{(2)}(x,y)$ and
$\rho^{(2)}(x_1,x_2)$  only depends on $\|x_1-x_2\|$. %Without any loss of generality, we assume the $n$th moment density of $\Phi$ exists, i.e., $\rho^{(n)}$ exists for $n\in \mathbb{N}$.  
Without ambiguity, we let $\rho^{(2)}(x_2-x_1) \triangleq \rho^{(2)}(x_1,x_2)$.
Similarly, the $n$th moment density $\rho^{(n)}(x_1,x_2,\ldots,x_n)$ is the  the density (with respect to the Lebesgue measure) pertaining to the $n$th-order factorial moment measure $\alpha^{(n)}$, and we let $\rho^{(n)}(x_2-x_1,\ldots,x_n-x_1) \triangleq  \rho^{(n)}(x_1,\ldots,x_n)$.

We assume all BSs are always transmitting and the transmit power is fixed to 1.   %with unit power. %  
Each mobile user receives signals from its nearest BS, and all other BSs act as interferers (the frequency reuse factor is 1).
Every signal is assumed to experience path loss and  fading.    %and additive thermal noise.
We consider both non-singular and singular path loss models, which are, respectively,    %$\ell(x)$
$\ell(x)= (1+\|x\|^{\alpha})^{-1}$ and $\ell(x)= \|x\|^{-\alpha}$, where $\alpha > 2$. (Since $\ell(x)$ only depends on $\|x\|$, in this paper, $\ell(x)$ and $\ell(\|x\|)$ are equivalent.)
We assume that the fading is independent and identically distributed (i.i.d.) for signals from all BSs. %As to the fading type,
The fading can be small-scale fading,  shadowing  or a combination of the two.  % is neglected.
We  mainly   focus on Nakagami-$m$ fading, which includes  Rayleigh fading as a special case, and the combination of Nakagami-$m$ fading and log-normal shadowing. 
The thermal noise  is assumed to be additive and constant with power $W$. % but no specific distribution is assumed.
We define the mean SNR as the received SNR at a distance of $r = 1$, where its value is $1/(2W)$ for the non-singular path loss model and $1/W$ for the singular path loss model.  %\textrm{SNR} =

To formulate the SINR and the success probability, we first define the nearest-point operator $\np_{\varphi}$ for a  point pattern $\varphi\subset \mathbb{R}^2$ as
\begin{equation}
\np_{\varphi}(x) \triangleq \argmin_{y \in \varphi}  \{ \|y-x\| \}, \quad x \in \mathbb{R}^2.
\end{equation}
If the nearest point is not unique, the operator  picks  one of the nearest points uniformly at random.
The
SINR at location $z \in \mathbb{R}^2$  has the form
\begin{equation}
\textrm{SINR}_{z} = \frac{h_{u} \ell(u-z)}{W + \sum_{x \in \Phi \setminus \{u\} }h_x \ell(x-z) },
\end{equation}
where $u = \np_{\Phi}(z)$ and $h_x$ denotes the i.i.d. fading variable for $x \in \Phi$ with CDF $F_h$ and PDF $f_h$.
For a m.i.~point process,  the success probability $\mathbb{P} (\textrm{SINR}_{z} > \theta)$ does not depend on $z$, and we define
\begin{equation}
\Pc(\theta) = \mathbb{P} (\textrm{SINR} > \theta).
\end{equation}
Hence, without  loss of generality, we focus on the success probability at the origin $o$.
Since each user
communicates with its nearest BS, the interference at $o$ only comes
from the BSs outside the open disk $b(o,r) \triangleq \{ x\in \mathbb{R}^2: \| x\| < r \}$, where $r = \| \np_{\Phi}(o) \|$. %range of the contact distance at the user location.
The total interference, denoted by $I(\Phi)$, is
\begin{equation}
I(\Phi) = \sum_{x\in \Phi \setminus \np_{\Phi}(o)} h_x\ell(x).   %I_{\xi} = \sum_{x\in \Phi\bigcap b(o,\xi)^{c} \backslash \{u\}} h_x\ell(x),
\end{equation}

\mh{This I do not understand. If $\zeta$ is a location (i.e., an arbitrary element of $\mathbb R^2$), then a.s.~$\zeta\notin\Phi$, so writing
$\Phi\setminus\{\zeta\}$ is potentially misleading. Can't you assume $\zeta\in\Phi$?}

\subsection{Asymptotic Deployment Gain}
In \cite{AGTWC13}, we introduced  the  deployment gain (DG) for interference-limited networks. 
Here we redefine the DG, to include the thermal noise. % is not included in \cite{AGTWC13}.

\begin{definition}[Deployment gain]
The deployment gain, denoted by $G(\pt)$, is the  ratio of the $\theta$ values  between the success curves  of  the given point process (or point set) and the PPP  at a given target success probability $\pt$, i.e.,
\begin{equation}
G(\pt) = \frac{\Pc^{-1}(\pt) }{(\Pcppp)^{-1}(\pt)}
\label{G11}
\end{equation}
where $\Pcppp(\theta)$ and $\Pc(\theta)$ are, respectively, the success  probabilities of the PPP and the given point process $\Phi$. % (or point set).
\end{definition}

This definition is analogous to the notion of the coding gain   commonly used in coding theory \cite[Ch.~1]{CodingGain}.

Fig. \ref{f0} shows the success probability of the PPP, the Mat\'{e}rn cluster process (MCP)  \cite[Ch.~3]{book},   %, the Mat\'{e}rn hard-core process (MHP) \cite[Ch.~3]{book}
and the randomly translated triangular lattice.
The intensities of all the three point processes are the same.
We observe that for $\pt > 0.3$, the DG is approximately  constant, e.g. the DG of the MCP is about $-3$ dB. In Fig. \ref{f0}, the success probability curves of the PPP that are shifted by $G(0.6)$ (in dB) of the MCP and the triangular lattice are also drawn. We see that  the shifted curves overlap quite exactly with the curves of the MCP and the triangular lattice, respectively, for
all $\pt > 0.3$.  %$\theta < 3$ dB.
It is thus sensible to study the DG as $\pt \to 1$ and find out whether the DG  approaches a constant.
To do so,
analogous to the notion of the asymptotic coding gain, we  define the asymptotic deployment gain (ADG).
\begin{definition}[Asymptotic deployment gain] The ADG, denoted by $\hat{G}$, is the deployment gain $G(\pt)$ when $\theta \to 0$, or, equivalently, when  $\pt \to 1$:
\begin{equation}
\hat{G} = \lim_{\pt \to 1} G(\pt) .
\label{G12}
\end{equation}
\end{definition}

\solution
\begin{figure}[hbtp]
\centering
  \includegraphics[width=0.8\figwidth]{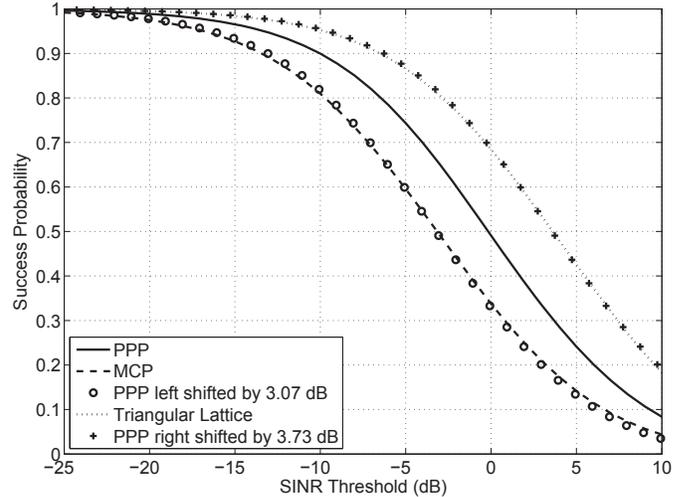}\\
  \caption{The  success probability of the PPP with intensity $\lambda = 0.1$, the MCP with $\lambdap = 0.01$, $\bar{c} = 10$ and $\rc = 5$ (see Section \ref{sec:SC-PP} for an explanation of these  parameters),      %,  the MHP with $\lambdab = 0.532$ and $\rh = 1$
  and the triangular lattice with density $\lambda = 0.1$   for Rayleigh fading, path loss model $\ell(x)= (1+\|x\|^{4})^{-1}$ and no noise, which are simulated on a $100 \times 100$ square.
The  lines are the success  probability curves of the three point processes, while the markers indicate  the success  probability curves of the PPP shifted by the DGs of the MCP and the triangular lattice at $\pt = 0.6$.  }
  \label{f0}
\end{figure}
\else
\begin{figure}[hbtp]
\centering
  \includegraphics[width=\figwidth]{Fig0_v7Originv2.eps}\\
  \caption{The  success probability of the PPP with intensity $\lambda = 0.1$, the MCP with $\lambdap = 0.01$, $\bar{c} = 10$ and $\rc = 5$ (see Section \ref{sec:SC-PP} for an explanation of these  parameters),      %,  the MHP with $\lambdab = 0.532$ and $\rh = 1$
  and the triangular lattice with density $\lambda = 0.1$   for Rayleigh fading, path loss model $\ell(x)= (1+\|x\|^{4})^{-1}$ and no noise, which are simulated on a $100 \times 100$ square.
The  lines are the success  probability curves of the three point processes, while the markers indicate  the success  probability curves of the PPP shifted by the DGs of the MCP and the triangular lattice at $\pt = 0.6$.  }
  \label{f0}
\end{figure}
\fi

Note that, the ADG may not exist for some point processes and fading types. In the following section, we  will provide some sufficient conditions for the existence of the ADG.  %on point processes and fading types.  
For Rayleigh fading, the ADG of the MCP exists. %, and though not the main concern of this paper, it is not hard to prove that the ADG of the  triangular lattice also exists. 

Similar to the DG, the ADG  measures the success probability but characterizes the difference between the success  probability of the PPP and a given point process as the success probability approaches 1 instead of for a target success probability, and by observation from Fig.~\ref{f0}, the ADG closely approximates the DG for all practical values of the success probability.     %  is valid more generally.
Hence, given the ADG of a point process, we can evaluate its success probability by shifting (in dB) the corresponding PPP results, that is to say, $\Pc(\theta) \approx \Pcppp(\theta/\hat{G})$ and $\Pc(\theta) \sim \Pcppp(\theta/\hat{G})$, $\theta \to 0$. In Fig. \ref{f0}, we observe that $\hat{G} \approx 2.4$  for the triangular lattice and $\hat{G} \approx 0.5$  for the MCP.
Note that the ADG relative to the PPP permits an immediate calculation of the ADG between two arbitrary point processes.

\iffalse
Consider a cellular network that consists of base stations (BSs), in which mobile uses are independently located according to some independent stationary point process.
We assume each mobile user receives signals from the nearest BS, and all other BSs act as interferers (the frequency reuse factor is 1).
Aiming to study the properties of the outage probability averaged over the plane, some other assumptions are made:
\begin{enumerate}
\item all BSs are always transmitting signals and the transmit power is fixed to 1;
\item the power loss propagation $\ell(x) = (1+\|x\|^{\alpha})^{-1}$ is considered and the path loss exponent $\alpha>2$;  %{\color{red}asd}
\item the shadowing effect is neglected; %{\color{red} Nooootice!!}
\item the thermal noise is ignored;  %{\color{red} Nooootice!!}
\item all signals experience the same fading. In this paper, we focus on Rayleigh fading and more generally, Nakagami-$m$ fading.
\end{enumerate}
\fi

\section{Existence of the Asymptotic Deployment Gain}
In this section, we derive several  important asymptotic properties of the SINR distribution, given some general assumptions about  the point process  and  the CDF of the fading variables. %$n$-th moment of the contact distance and the interference.
These asymptotic properties, in turn, prove the existence of the ADG. 

\subsection{Definition of a General Class of Base Station Models}
First we give several notations, based on which we introduce the precise class of point processes we focus on.
We define  the contact distance $\xi \triangleq \| \np_{\Phi}(o) \|$, and
define the supremum of $\xi$ as 
$\xi_{\textrm{max}} \triangleq \sup_{x\in \mathbb{R}^2} \min_{y\in \Phi}\{ \|x-y\| \}$.  
Due to the ergodicity of the point process (which follows
from the mixing property) \cite[Ch.~2]{book}, $\xi_{\max}$ does not depend on the realization of $\Phi$.
 $\xi_{\textrm{max}} = \infty$
in many  mixing point processes. %, such as the PPP, the MCP, the Mat\'{e}rn hard-core process (MHP) \cite[Ch.~3]{book} and so on.

We define $\Phi_o^\zeta\triangleq(\Phi\mid\np_{\Phi}(o)=\zeta)$, where $\zeta \in \mathbb{R}^2 \setminus \{o\}$,   
as the conditional point process that satisfies $\np_{\Phi}(o)=\zeta$, which
implies  $\zeta\in\Phi_o^\zeta$ and $\Phi_o^\zeta(b(o,\|\zeta\|))=0$.\footnote{For a point process $\Phi$, $\Phi(B)$ is a random variable that denotes the number of points in set $B \subset \mathbb{R}^2$.}
So given that $\zeta$ is the closest point of $\Phi$ to $o$, the total interference is $\Iy$. 
However, it is tricky to directly handle  the conditional point process conditioned on that there is an empty disk,  if the point process is not the PPP. 
Thus, we compare the interference in $\Phi_o^\zeta$ with the interference from a point process where the desired BS $\zeta$ is not necessarily the closest one. To this end, we define $\Phi^\zeta \triangleq (\Phi\mid\zeta\in\Phi)$ and consider its interference outside  a disk of radius $\|\zeta\|/2$ around the origin:
\begin{equation}
\IzII = \sum_{x\in \Phi^{\zeta} \bigcap B_{\zeta/2} \setminus \{\zeta\}} h_x\ell(x),
\end{equation}
where $B_{\zeta/2} \triangleq  \mathbb{R}^2 \setminus b(o,\|\zeta\|/2)$.  %b(o,\|\zeta\|/2)^{\rm c} =
Note that it is not necessary to set the radius of the disk to $\|\zeta\|/2$; in fact,  the radius could be any quantity that is smaller than $\|\zeta\|$. %We consider  $\|\zeta\|/2$ for convenience.   
Since we can use standard Palm theory \cite{book} for $\Phi^\zeta$,  it is easier to deal with $\Phi^\zeta$ than $\Phi_o^\zeta$.   

\iffalse
Given $\xi = y$,  we assume $\zeta \in \mathbb{R}^2$ with $\|\zeta\| = y$ and define the conditional point process of $\Phi$ as $\Phi^{\zeta}_o \triangleq ( \Phi \mid \np_{\Phi}(o) = \zeta )$.
The total interference is $\Iy$.
Besides, we also consider a scenario where each user does not have to communicate with its nearest BS but the fixed BS $\zeta \in \Phi$.
We define this conditional point process as $\Phi^{\zeta} \triangleq ( \Phi \mid  \zeta \in \Phi )$ and define the cumulative interference from the points outside $b(o,\|\zeta\|/2)$  as
\begin{equation}
\IzII = \sum_{x\in \Phi^{\zeta} \bigcap B_{\zeta/2} \setminus \{\zeta\}} h_x\ell(x),
\end{equation}
where $B_{\zeta/2} = \mathbb{R}^2 \setminus b(o,\|\zeta\|/2)$.
\fi

\iffalse
$\Iy$ is the interference from the conditional point process of $\Phi$ given that    %that  the nearest point of $\Phi$ to  $o$ is
$\min_{x \in \Phi} \{ \|x\| \} = y$, or equivalently, given  that  $\exists \zeta \in \Phi$ with $\|\zeta\| = y$ and $\Phi(b(o,y)) = 0$, where $\Phi(b(o,y))$ is the number of points of $\Phi$ in $b(o,y)$.
We denote this point process as $\Phi^{\zeta}_o$.
Let  $\Phi^{\zeta}$ be the conditional point process given  only  $\zeta \in \Phi$ and $\Izn$ be the cumulative interference from $\Phi^{\zeta} \bigcap  b(o,y/n)^{\rm c} \setminus \{\zeta\}$ given $\zeta$ is the BS providing desired signals and thus
\begin{equation}
\Izn = \sum_{x\in \Phi\bigcap b(o,\|\zeta\|/n)^{c} \setminus \{\zeta\}} h_x\ell(x),
\end{equation}
where $n \in \mathbb{N}$.
\fi

To motivate the above notations, we give an  illustration of them in Fig. \ref{illu1}. Both $\Phi^{\zeta}_o$ and $\Phi^{\zeta}$ have a point at $\zeta$, and we let $\|\zeta\| = y$. All points of $\Phi^{\zeta}_o$ are  located in the striped region  (outside $b(o,y)$), and $\Iy$ is the interference from all these points except $\zeta$.  In contrast,    $\Phi^{\zeta}$ may
have points throughout the whole plane, but  $\IzII$ is the interference only  from the points of $\Phi^{\zeta}$ in the shaded region (outside $b(o,y/2)$) except $\zeta$.

\solution
\begin{figure}[hbtp]
\centering
  \includegraphics[width=0.5 \columnwidth]{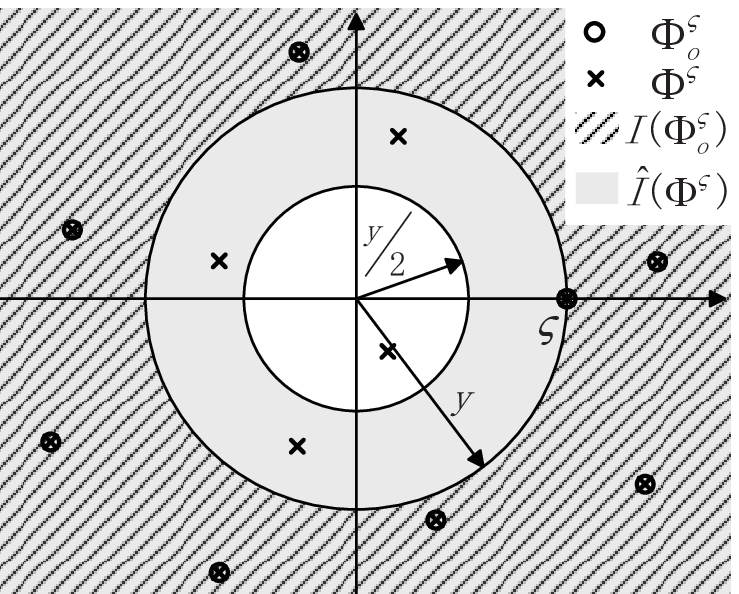}\\    %0.5 \columnwidth  
  \caption{An illustration of  $\Phi^{\zeta}_o$, $\Phi^{\zeta}$, $\Iy$ and $\IzII$, where $ \|\zeta\| = y$.} %  $\Iy$ and  $I^{!\zeta}(\|\zeta\| / 2,\Phi^{\zeta})$. }
  \label{illu1}
\end{figure}
\else
\begin{figure}[hbtp]
\centering
  \includegraphics[width=0.8\figwidth]{illu_v6.eps}\\    %0.5 \columnwidth  
  \caption{An illustration of  $\Phi^{\zeta}_o$, $\Phi^{\zeta}$, $\Iy$ and $\IzII$, where $ \|\zeta\| = y$.} %  $\Iy$ and  $I^{!\zeta}(\|\zeta\| / 2,\Phi^{\zeta})$. }
  \label{illu1}
\end{figure}
\fi

Using the above notations,  we define a general class of point process distributions that we use to rigorously state our main result on the SINR distribution. % we define a general class of point processes, based on which our main result about the asymptotic property of the coverage probability is  derived.
\begin{definition}[Set $\mathcal{A}$]
The set $\mathcal{A}=\{P_{\Phi}\}$ is the set of all m.i. point process distributions $P_{\Phi}$ that are mixing and that satisfy the following four conditions. If a point process $\Phi$ is distributed as $P_{\Phi}\in \mathcal{A}$,
\begin{enumerate}
\item for all $n \geq 2$, the $n$th moment density of $\Phi$ is bounded,  %almost everywhere (a.e.), 
i.e., %$\rho^{(n)}$ exists for all $n\in \mathbb{N}$;
$\exists q_n<\infty$, such that
$\rho^{(n)}(x_1,\ldots,x_n) < q_n$, 
for  $x_1,\ldots,x_n   \in \mathbb{R}^2$;
\item for all $y>0$, $\exists \zeta \in \mathbb{R}^2$ with $\|\zeta\| = y$, such that   %the probability of no points of $\Phi^{\zeta}$ being located in $b(o,y)$ is not zero, i.e.,
$\mathbb{P}(\Phi^{\zeta}(b(o,y))=0 ) \neq 0$;   %, where $\zeta \in \mathbb{R}^2$ and $\|\zeta\| = y$;
\item $\exists y_0 > 0$, such that for all $y>y_0$ and $\zeta \in \mathbb{R}^2$ with $\|\zeta\| = y$,     %for large $y>0$,
$\IzII$ stochastically dominates $\Iy$, i.e., $\mathbb{P}(\Iy>z) \leq \mathbb{P}(\IzII>z)$,  for all $z \geq 0$;
\item $\forall n \in \mathbb{N}$, the $n$-th moment of   the contact distance  $\xi$ is bounded, i.e., $\exists b_n < \infty$, s.t. $\mathbb{E}(\xi^n) <b_n$.
\end{enumerate}
\label{setA}
\end{definition}

To clarify the need for these conditions, we offer the following remarks: 
\begin{enumerate}
\item  By the mixing property, we have that $\rho^{(2)}(x_1,x_2) \to \lambda^2$, as $\|x_1-x_2\| \to \infty$, which indicates that when $\|x_1-x_2\|$  is large enough, $\rho^{(2)}(x_1,x_2)$ is bounded. The first condition is stronger than that. It guarantees that the $n$th moment measure of $\Phi$ is absolutely continuous with respect to the Lebesgue measure, which, in turn, implies that $\Phi$ is \emph{locally finite} \cite[Ch.~2.2]{book}. A point process is locally finite if and only if $\Phi(B) < \infty$ a.s., for any $B \subset \mathbb{R}^2$ with $\nu(B)<\infty$, where $\nu(\cdot)$ is the Lebesgue measure. Local  finiteness is  a standard assumption in point process theory, but it is too weak for our purposes.  
For example, Condition 1  excludes some extreme{\footnote{We call this point process extreme since, conditioned on a point at $o$, there is a positive probability of having another point on a subset of Lebesgue measure zero.}} cases, such as the Gauss-Poisson point process as described in \cite[Sec.~3.4]{book}, which is locally finite. 
\item Since $\Phi$ is a m.i.~point process, the second condition  is equivalent to requiring  that for all $y>0$, $\forall \zeta \in \mathbb{R}^2$ with $\|\zeta\| = y$, such that   %the probability of no points of $\Phi^{\zeta}$ being located in $b(o,y)$ is not zero, i.e.,
$\mathbb{P}(\Phi^{\zeta}(b(o,y))=0 ) \neq 0$. That is to say, if $\zeta \in \Phi$,  % ($\|\zeta\| \neq 0$), 
the probability of no points of $\Phi$ being located in $b(o,\|\zeta\|)$ is  positive.  % larger than zero. 
The condition also 
 implies that $\xi_{\rm max } =  \infty$. Because if $\xi_{\rm max } < \infty$,  for all $y > \xi_{\rm max } $, 
 there surely is at least one point of $\Phi$ in $b(o, y)$, which leads to a contradiction since it would imply that  % $\forall \zeta \in \mathbb{R}^2$ with $\|\zeta\| >  \xi_{\rm max }$, 
 $\mathbb{P}(\Phi^{\zeta}(b(o,y))  > 0 ) = 1$.
So,  the condition  excludes the m.i.~and mixing point processes where there exists $r_0>0$, such that for all $x \in \mathbb{R}^2$, there is at least one point in the region $b(x,r_0)$.  Those point processes\footnote{Note that the perturbed triangular lattice\cite{AGTWC13} is not in the exclusion category, since it is not mixing.} may be constructed, but are rarely considered in the context of  wireless networks. 
\item The third condition is based on  the two random variables $\Iy$ and $\IzII$, whose expressions contain the fading variables. But, in fact, the condition is independent of the fading type, since the fading variables are i.i.d.~and their expectation is bounded.  
The condition means that  there exists $y_0 > 0$, such that for all $y>y_0$ and  $\zeta \in \mathbb{R}^2$ with $\|\zeta\| = y$, the CCDF of the interference from $\Phi^{\zeta} \bigcap B_{\zeta/2} \setminus \{\zeta\}$ is always no smaller than  the CCDF of the interference from $\Phi_o^\zeta \setminus \{\zeta\}$. 
Most point processes meet the condition, since an extra region $b(o,y) \setminus b(o,y/2)$ is included in $\IzII$, but not in $\Iy$. Some point processes, which are seldom considered,  may violate the condition. For example, albeit somewhat artificial, for small $\epsilon >0$, the expectation of $\Phi_o^{\zeta}(b(o,\|\zeta\|+\epsilon))$ is much greater than that of $\Phi^{\zeta}(b(o,\|\zeta\|+\epsilon))$, which, at last, leads to the violence of the third condition.   
Such kind of point processes are beyond our consideration.
\item The fourth condition is satisfied by most point processes that are considered. A sufficient condition of the fourth condition is that $F^{\rm c}_{\xi}(x) < \exp(-c_0x)$, as $x \to \infty$, where $F^{\rm c}_{\xi}$ is the CCDF of $\xi$ and $c_0 \in \mathbb{R}^{+}$. One simple example is the PPP with intensity $\lambda$, whose CCDF of $\xi$ is $F^{\rm c}_{\xi}(x) = \exp(-\lambda \pi x^2)$. 
\end{enumerate}

In summary, the four conditions in Def.~\ref{setA} are quite  mild;  they   are satisfied by
most point processes that are usually considered in wireless networks and in stochastic geometry,
such as the PPP, the MCP,  the Mat\'{e}rn hard-core process (MHP) \cite[Ch.~3]{book} and the Ginibre process\cite{NaTWC14}.
The triangular lattice is not included, since it is not  mixing and $\xi_{\textrm{max}} <  \infty$.
We will prove that the laws of the PPP, the MCP and the MHP belong to $\mathcal{A}$ in Section \ref{sec:SC-PP}.

\subsection{Main Results}

Before  presenting  the main theorem, we state a property of the distribution of  $\Iy$. 
\begin{lemma}
Assume the fading variable $h$ satisfies that $\forall n\in \mathbb{N}$, $\mathbb{E}(h^n)<+\infty$. 
For a point process $\Phi$ with $P_{\Phi}\in \mathcal{A}$, the following statements hold:
\begin{enumerate}
\item for $\ell(x) = ( 1+\| x\|^{\alpha})^{-1}$, all moments of the interference $\Iy$ are  bounded, i.e., $\forall n \in \mathbb{N}$,  $\exists c_n \in \mathbb{R}^+$, such that  $\mathbb{E}(\Iy^n)<c_n$, where $c_n$ does not depend on $\zeta$; %, where $c_n$ is independent of $\zeta$.

\item for $\ell(x) =  \| x\|^{-\alpha}$, all moments of the interference $\Iy$ are  bounded, and $\forall n \in \mathbb{N}$, $\exists c_n \in \mathbb{R}^+$, such that $\mathbb{E}(\Iy^n)<c_n \max\{1,\|\zeta\|^{2-\alpha n}\} $.
\end{enumerate}
\label{lemmaIRDirect}
\end{lemma}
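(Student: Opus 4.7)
The plan is to exploit Condition~3 of Def.~\ref{setA}, which for $y>y_0$ gives $\mathbb{E}(\Iy^n)\le\mathbb{E}(\IzII^n)$ and thus reduces the problem to computing moments of a shot noise under the ordinary Palm distribution of $\Phi$ at $\zeta$ (where standard tools apply). I would then expand the $n$-th power via the partition form of the multinomial theorem and apply the $k$th-order reduced Campbell formula at $\zeta$, obtaining
\[
\mathbb{E}(\IzII^n)=\sum_{\pi}\prod_{B\in\pi}\mathbb{E}(h^{|B|})\int_{B_{\zeta/2}^{|\pi|}}\prod_{B\in\pi}\ell(x_B)^{|B|}\,\frac{\rho^{(|\pi|+1)}(\zeta,x_1,\ldots,x_{|\pi|})}{\lambda}\,dx,
\]
with $\pi$ running over partitions of $\{1,\ldots,n\}$. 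Condition~1 allows me to replace each density by the uniform bound $q_{|\pi|+1}/\lambda$, decoupling the integral into a product $\prod_i\int_{B_{\zeta/2}}\ell(x)^{n_i}\,dx$ over the block sizes $n_1,\ldots,n_{|\pi|}$.

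For the non-singular law, each factor $\int_{B_{\zeta/2}}(1+\|x\|^\alpha)^{-n_i}\,dx$ is dominated by $\int_{\mathbb{R}^2}(1+\|x\|^\alpha)^{-n_i}\,dx$, which is finite (since $\alpha>2$ and $n_i\ge 1$) and independent of $\zeta$; summing over the finitely many partitions of $\{1,\ldots,n\}$ and using $\mathbb{E}(h^m)<\infty$ produces the constant $c_n$. For the singular law, polar coordinates give $\int_{B_{\zeta/2}}\|x\|^{-\alpha n_i}\,dx=\tilde c_{n_i}\|\zeta\|^{2-\alpha n_i}$, so a $k$-block partition contributes order $\|\zeta\|^{2k-\alpha n}$; summing over $k\in\{1,\ldots,n\}$, the $k=1$ term dominates for $\|\zeta\|<1$ while every exponent is non-positive for $\|\zeta\|\ge 1$, which yields the envelope $c_n\max\{1,\|\zeta\|^{2-\alpha n}\}$.

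For $y\le y_0$, where Condition~3 is silent, I would instead use the Palm identity
\[
\mathbb{E}(\Iy^n)=\frac{\mathbb{E}^{!\zeta}\!\left[J^n\,\mathbf{1}\{\Phi(b(o,y))=0\}\right]}{\mathbb{P}^{!\zeta}(\Phi(b(o,y))=0)},
\]
with $J=\sum_{x\in\Phi\setminus\{\zeta\},\,\|x\|>y}h_x\ell(x)$. Dropping the indicator bounds the numerator by $\mathbb{E}^{!\zeta}(J^n)$, which reruns the partition expansion (now with inner integrals over $\{\|x\|>y\}$) and produces the same envelope. Condition~2 guarantees the denominator is positive pointwise, so the main obstacle is a \emph{uniform} positive lower bound on it over $\|\zeta\|\in(0,y_0]$: a first-moment Markov estimate $\mathbb{E}^{!\zeta}\Phi(b(o,y))\le(q_2/\lambda)\pi y^2$ (via Condition~1) shows the denominator tends to $1$ as $y\to 0$, and a continuity/compactness argument on an interval $[y_1,y_0]$ together with Condition~2 handles the remainder. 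Merging the two regimes then gives the claimed bounds uniformly in $\zeta$ (in the non-singular case) or of the stated $\zeta$-dependent form (in the singular case).
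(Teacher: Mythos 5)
Your overall strategy for bounding $\mathbb{E}(\IzII^n)$ matches the paper's: expand the $n$-th power (multinomial/partition form), apply the reduced Campbell formula at $\zeta$, replace each $\rho^{(|\pi|+1)}$ by the uniform bound from Condition~1 to decouple the integrals into one-dimensional factors, and then evaluate those factors for the two path-loss models. The non-singular case being dominated by $\int_{\mathbb{R}^2}(1+\|x\|^\alpha)^{-n_i}\,dx$ and the singular case producing $\|\zeta\|^{2k-\alpha n}$ with $k$ the number of blocks, bounded by $\max\{1,\|\zeta\|^{2-\alpha n}\}$, are exactly what the paper does. For $y>y_0$, your appeal to Condition~3 is the same step the paper takes.

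Where you genuinely diverge is the regime $y\le y_0$. The paper does not switch to a direct Palm ratio: it keeps working with $\IzII$, writing
\[
\mathbb{E}(\IzII^n)\;\ge\;\mathbb{E}\bigl(\IzII^n\mid\Phi^{\zeta}(b(o,y))=0\bigr)\,\mathbb{P}\bigl(\Phi^{\zeta}(b(o,y))=0\bigr)\;=\;\mathbb{E}(\Iy^n)\,\mathbb{P}\bigl(\Phi^{\zeta}(b(o,y))=0\bigr),
\]
and then observes that, translating by $-\zeta$ and using rotation invariance, the balls $b(-\zeta,y)$ (all tangent to $o$) are nested in $y$, so the void probability $\mathbb{P}(\Phi^{\zeta}(b(o,y))=0)$ is \emph{non-increasing} in $y$. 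Hence for all $y\le y_0$ it is bounded below by the single positive constant $\mathbb{P}(\Phi^{\zeta'}(b(o,y_0))=0)$, giving $K_0$ in one line. Your Palm-ratio version of the same inequality is algebraically equivalent (your numerator equals $\mathbb{E}(\IzII^n)$ restricted to the void event up to the change from $b(o,y/2)^{\rm c}$ to $b(o,y)^{\rm c}$, which only helps), so both routes are available.

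The concrete gap in your write-up is the step that produces the uniform lower bound on the denominator. You invoke ``continuity/compactness'' on $[y_1,y_0]$, but nothing in Def.~\ref{setA} gives continuity of $y\mapsto\mathbb{P}^{!\zeta}(\Phi(b(o,y))=0)$; Condition~2 only gives pointwise positivity, and a positive function on a compact interval need not have a positive infimum without some regularity. The Markov estimate alone only covers $y$ small enough that $(q_2/\lambda)\pi y^2<1$, which may fall short of $y_0$. The fix is exactly the paper's monotonicity observation: since the tangent balls are nested, the void probability is monotone in $y$, so its infimum over $(0,y_0]$ is attained at $y_0$ and is positive by Condition~2 -- no continuity, and no Markov estimate, are needed. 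With that substitution your proof is complete and otherwise matches the paper's.
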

\begin{proof}
See Appendix \ref{sec:appE}.
\end{proof}

Since $\Iy$ can be interpreted as the total interference at $o$ if the nearest base station to $o$ is at $\zeta$, 
Lemma \ref{lemmaIRDirect} shows that all  moments of the total interference are bounded. If the path loss model is non-singular, the bound can be chosen to be independent of $\|\zeta\|$.  %, the location  of the nearest base station. 
However, if the path loss model is singular, the bound depends on $\|\zeta\|$, and if $\|\zeta\|$ goes to 0, it can be proved that $\mathbb{E}(\Iy)$  becomes arbitrarily large  for some BS processes, e.g., the PPP.\footnote{Note that for the PPP with intensity $\lambda$, if we de-conditioned on $\|\zeta\|$, by Campbell's theorem, the mean interference $\mathbb{E}(I(\Phi)) = \sum_{x \in \Phi \setminus  \np_{\Phi}(o)} h_x \|x\|^{-\alpha} = \mathbb{E}(h) \int_0^{\infty} \frac{(2 \pi \lambda)^2}{\alpha-2}  x^{3-\alpha} e^{-\lambda \pi x^2}dx$. So, $\mathbb{E}(I(\Phi))$ is finite for $2<\alpha < 4$,  and infinite for $\alpha \geq 4$.} 

Now we are equipped to state our main result: 
if the CDF of the fading variable $h$ decays polynomially around 0 and all moments of $h$ are bounded, then as a result of the  boundedness of the moments of the interference, the outage probability $1-\Pc(\theta)$ expressed in dB, as a function of the SINR threshold $\theta$, also in dB, has the same slope as $\theta \to 0$, for all $\Phi$ with $P_{\Phi}\in \mathcal{A}$.

\begin{theorem}
For  a point process $\Phi$ with $P_{\Phi}\in \mathcal{A}$, if the fading variable satisfies 
\begin{enumerate}
\item  $\exists m \in (0, +\infty)$, s.t. $F_h(t) \sim a t^m$, as $t \to 0$, 
where $a>0$ is constant, 
\item  $\forall n\in \mathbb{N}$, $\mathbb{E}(h^n)<+\infty$, 
\end{enumerate}
\iffalse
\begin{enumerate}
\item $\xi_{\textrm{max}} = \infty $ and   $\exists \hat{R} > 0$, s.t. for $\|\zeta\| > \hat{R}$, $\mathbb{P}(\Iy>y) \leq \mathbb{P}(I^{!\zeta}(\|\zeta\| / 2,\Phi^{\zeta})>y)$,  for all $y>0$;
\item  $F_{h}^{\rm c}(x)$ has at most an exponential tail and $\exists m \in (0, +\infty)$, s.t. $F_h(t) \sim a t^m$, as $t \to 0$, where $a>0$ is a constant;   %  $\lim_{t \to 0} \frac{F_h(t)}{t^m}  = a < +\infty$;
\item   $ \forall n \in \mathbb{N}$, $\exists b_n \in (0, +\infty)$, s.t. $\mathbb{E}(\xi^n) <b_n$,  %    $\mathbb{E}_r(r^n) <b^n< +\infty$, for $n \in \mathbb{N}$;
\end{enumerate}
\fi
then we have
\begin{equation}
 \frac{1-\Pc(\theta)}{\theta^m} \to \kappa,  \quad \textrm{ as } \theta \to 0,
 \label{DGsigma}
\end{equation}
where $0<\kappa < \infty$  does not depend on $\theta$ and is  given by   %$\kappa =  \int_0^{\infty} \mathbb{E}_{I^{!\zeta}({\|\zeta\|},\Phi^{\zeta}_o)} \big[ a \big({y}^{\alpha}I^{!\zeta}({y},\Phi^{\zeta}_o) + W \big)^m  \mid {\|\zeta\|}=y \big]   f_{\xi}(y) dy$,  where
\begin{align}
\kappa = & \int_0^{\infty} \mathbb{E}_{\Iy} \Big[ a  \ell(y) ^{-m} \big(\Iy   + W \big)^m\Big]   f_{\xi}(y) dy
 \label{deltaexp}
\end{align}
$ ( \|\zeta\| = y ) $  and $f_{\xi}$ is the PDF of $\xi$. %the contact distance of $\Phi$.

\label{thmain}
\end{theorem}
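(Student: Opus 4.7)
The plan is to write the outage probability as an integral over the contact distance $\xi$ and the conditional interference, then exchange limit and expectation via dominated convergence to extract the polynomial asymptotic.

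\textbf{Step 1 (Conditional decomposition).} By conditioning on the nearest-BS location $\zeta$ and using motion-invariance to reduce the direction of $\zeta$ to an integration over $y=\|\zeta\|$ with density $f_{\xi}$, I would write
\begin{equation}
1-\Pc(\theta) \;=\; \int_0^{\infty} \mathbb{E}\!\left[\,F_h\!\left(\theta\,\ell(y)^{-1}\big(W + \Iy\big)\right)\right] f_{\xi}(y)\,dy,\nonumber
\end{equation}
where the inner expectation is taken over the law of $\Iy=I(\Phi_o^{\zeta})$ with $\|\zeta\|=y$. Dividing by $\theta^m$ puts the statement in the form to which I will apply DCT.

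\textbf{Step 2 (Pointwise convergence and global bound on $F_h$).} Fix $y$ and a realization of $\Iy$. Since $\theta\,\ell(y)^{-1}(W+\Iy)\to 0$ as $\theta\to 0$, the hypothesis $F_h(t)\sim a t^m$ yields the pointwise limit $\theta^{-m}F_h(\theta\ell(y)^{-1}(W+\Iy))\to a\,\ell(y)^{-m}(W+\Iy)^m$, which is exactly the integrand inside $\kappa$. To dominate, note that $F_h(t)\le(a+1)t^m$ on some $(0,t_0]$ and $F_h(t)\le 1\le t_0^{-m}t^m$ on $(t_0,\infty)$, so $F_h(t)\le C_0 t^m$ for all $t>0$ with $C_0=\max(a+1,t_0^{-m})$. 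Hence the normalized integrand is bounded by $C_0\,\ell(y)^{-m}(W+\Iy)^m$ uniformly in $\theta$.

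\textbf{Step 3 (Integrability of the dominating function).} Using $(W+\Iy)^m\le 2^m(W^m+\Iy^m)$, it suffices to bound $\int_0^{\infty}\ell(y)^{-m}\big(W^m+\mathbb{E}[\Iy^m]\big)f_{\xi}(y)\,dy$. For a non-integer $m$ I would control $\mathbb{E}[\Iy^m]$ via Lyapunov's inequality from the integer-moment bounds supplied by Lemma~\ref{lemmaIRDirect}. In the non-singular case $\ell(y)^{-m}=(1+y^{\alpha})^m\le C(1+y^{\alpha m})$, and $\mathbb{E}[\Iy^m]$ is uniformly bounded in $y$, so Condition~4 of Definition~\ref{setA} ($\mathbb{E}[\xi^{\alpha m}]<\infty$) closes the estimate. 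In the singular case $\ell(y)^{-m}=y^{\alpha m}$ while Lemma~\ref{lemmaIRDirect} gives $\mathbb{E}[\Iy^m]\le c_m\max(1,y^{2-\alpha m})$, whose product with $y^{\alpha m}$ is $c_m\max(y^{\alpha m},y^2)$; this is again integrable against $f_{\xi}$ by Condition~4.

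\textbf{Step 4 (DCT and positivity).} With pointwise convergence from Step~2 and the integrable dominating function from Step~3, the dominated convergence theorem yields $\theta^{-m}(1-\Pc(\theta))\to\kappa$ with $\kappa$ as in \eqref{deltaexp}. Finiteness of $\kappa$ is immediate from the dominating bound; strict positivity follows because the limiting integrand $a\,\ell(y)^{-m}(W+\Iy)^m f_{\xi}(y)$ is almost surely strictly positive.

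\textbf{Main obstacle.} The delicate point is Step~3 for the singular path-loss model, where $\ell(y)^{-m}$ blows up at $y=0$ and the moment bound on $\Iy$ simultaneously deteriorates; the saving cancellation $y^{\alpha m}\cdot y^{2-\alpha m}=y^2$ from Lemma~\ref{lemmaIRDirect} must be combined with the polynomial tail control on $\xi$ from Condition~4 and with the local regularity of $f_{\xi}$ near $0$ implicit in the local finiteness (Condition~1). A clean case split on whether $m\lessgtr 2/\alpha$ and on the two path-loss laws should handle this without surprises.
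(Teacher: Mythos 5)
Your proposal is essentially the paper's own proof: condition on $\xi$, normalize by $\theta^m$, bound $F_h(t)\le C_0 t^m$ globally (the paper's $G(t)<A$), invoke Lemma~\ref{lemmaIRDirect} for moment bounds, split on the non-singular and singular path-loss laws, and apply dominated convergence with the integrability check against $f_\xi$ coming from Condition~4 of Definition~\ref{setA}. Your explicit use of $(W+\Iy)^m\le 2^m(W^m+\Iy^m)$ together with Lyapunov's inequality for non-integer $m$ is a small but genuine tightening over the paper's ``by expanding $(\Iy+W)^m$,'' which is only literally valid for integer $m$. One slip in your closing remark: for the singular model $\ell(y)^{-m}=y^{\alpha m}$ vanishes, rather than blows up, at $y=0$; the quantity that deteriorates there is the moment bound $\mathbb{E}[\Iy^m]\lesssim y^{2-\alpha m}$, and the cancellation $y^{\alpha m}\cdot y^{2-\alpha m}=y^{2}$ you correctly exhibit is precisely what keeps the dominating function integrable near the origin.
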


\begin{proof}
See Appendix \ref{proof:thmain}.
\end{proof}

Theorem 1 shows that the ADG exists and     %, under Nakagami-$m$ fading,  
how it depends on the  other network parameters. The following theorem quantifies the ADG. 
\begin{corollary}
Under the same condition as in Theorem \ref{thmain}, 
the ADG of $\Phi$ exists and is given by
\begin{equation}
\hat{G} =   \Big( \frac{\kappa^{\rm PPP} }{ \kappa} \Big)^{\frac{1}{m}},
\end{equation}
where $\kappa^{\rm PPP}$ is the value for the PPP and $\kappa$  is the value for $\Phi$. For the PPP with intensity $\lambda$,
\solution
\begin{equation}
\kappa^{\rm PPP} =  2\lambda \pi \int_0^{\infty} \mathbb{E}_{I_r} \Big[ \frac{m^{m-1}}{\Gamma(m)}  \ell(r) ^{-m} \big(I_r   + W \big)^m\Big]  r \exp(-\lambda \pi r^2) dr,
\end{equation}
\else
\begin{align}
\kappa^{\rm PPP} &=  2\lambda \pi \int_0^{\infty} \mathbb{E}_{I_r} \Big[ \frac{m^{m-1}}{\Gamma(m)}  \ell(r) ^{-m} \big(I_r   + W \big)^m\Big]    \nonumber \\
&\quad \cdot r \exp(-\lambda \pi r^2) dr,
\end{align}
\fi
where $I_r = \sum_{x\in \Phi  \bigcap b(o,r)^{\rm c}} h_x\ell(x)$.
\label{Corol2}
\end{corollary}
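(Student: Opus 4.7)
The plan is to combine Theorem~\ref{thmain}, applied once to $\Phi$ and once to a reference PPP of the same intensity $\lambda$, with the standard reduced-Palm description of the PPP.

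First I would derive the ADG formula. Theorem~\ref{thmain} gives $1-\Pc(\theta)\sim\kappa\,\theta^m$ and, since the PPP belongs to $\mathcal{A}$ (as will be established in Section~\ref{sec:SC-PP}), also $1-\Pcppp(\theta)\sim\kappa^{\rm PPP}\,\theta^m$, both as $\theta\to 0$. Set $\theta=\Pc^{-1}(\pt)$ and $\theta'=(\Pcppp)^{-1}(\pt)$; since $\Pc$ and $\Pcppp$ are continuous and strictly decreasing near $1$, both $\theta$ and $\theta'$ tend to $0$ as $\pt\to 1$. Equating the two asymptotic expressions for the common value $1-\pt$ and dividing yields $(\theta/\theta')^m\to\kappa^{\rm PPP}/\kappa$, so $G(\pt)=\theta/\theta'\to(\kappa^{\rm PPP}/\kappa)^{1/m}$. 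In particular the limit exists, and equals the claimed ADG.

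Second I would specialize formula~\eqref{deltaexp} to the PPP. Two standard facts do the work. The contact distance of a PPP of intensity $\lambda$ is Rayleigh, with PDF $f_{\xi}(r)=2\pi\lambda\, r\,e^{-\pi\lambda r^2}$, which supplies the weight in the outer integral. For the inner expectation, Slivnyak's theorem together with the independence of the PPP on disjoint regions says that conditioning on a point at $\zeta$ and on $\Phi$ having no other points in $b(o,\|\zeta\|)$ yields, outside $b(o,\|\zeta\|)$, a fresh homogeneous PPP of intensity $\lambda$. Hence $I(\Phi_o^{\zeta})$ and $I_r=\sum_{x\in\Phi\cap b(o,r)^{\rm c}}h_x\ell(x)$ (with $r=\|\zeta\|$) have the same distribution, so the expectation $\mathbb{E}_{\Iy}[\cdot]$ in~\eqref{deltaexp} can be replaced by $\mathbb{E}_{I_r}[\cdot]$. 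Finally, the Nakagami-$m$ power CDF satisfies $F_h(t)\sim \frac{m^{m-1}}{\Gamma(m)}\,t^m$ as $t\to 0$ (expand the lower incomplete gamma function near zero), which fixes $a=m^{m-1}/\Gamma(m)$. Substituting these three items into~\eqref{deltaexp} gives the stated formula for $\kappa^{\rm PPP}$.

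There is no genuine obstacle; the argument is essentially bookkeeping on top of Theorem~\ref{thmain}. The step that most deserves a careful sentence is the identification of $\Phi_o^{\zeta}$ with a PPP of intensity $\lambda$ on $b(o,\|\zeta\|)^{\rm c}$, since the conditioning event combines a point at $\zeta$ with the requirement of an empty disk around the origin; Slivnyak together with the void probability is what makes the interference law collapse cleanly to that of $I_r$ and allows the outer integral to decouple into the Rayleigh factor $2\pi\lambda r\,e^{-\pi\lambda r^2}\,dr$.
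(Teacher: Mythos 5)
Your argument for the ADG limit $\hat G = (\kappa^{\rm PPP}/\kappa)^{1/m}$ is exactly the paper's proof: set $\theta_1 = \Pc^{-1}(\pt)$, $\theta_2 = (\Pcppp)^{-1}(\pt)$, invoke Theorem~\ref{thmain} twice, equate the two asymptotic expressions for the common outage $1-\pt$, and take the $m$th root of the ratio. The paper's proof stops there and leaves the explicit form of $\kappa^{\rm PPP}$ unjustified; you supply the missing bookkeeping, and correctly so. Specializing \eqref{deltaexp} requires two observations: that the PPP contact distance has density $f_\xi(r)=2\pi\lambda r\,e^{-\lambda\pi r^2}$ (supplying the outer weight), and that, by Slivnyak plus the independence of the PPP on disjoint regions, $\Phi_o^\zeta\setminus\{\zeta\}$ restricted to $b(o,\|\zeta\|)^{\rm c}$ is a fresh PPP of intensity $\lambda$, so $\Iy\stackrel{d}{=}I_r$ with $r=\|\zeta\|$ and the inner expectation $\mathbb{E}_{\Iy}[\cdot]$ may be replaced by $\mathbb{E}_{I_r}[\cdot]$. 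That identification is precisely what makes the substitution legitimate and is worth writing out. One minor remark (about the paper, not your proof): the Corollary is phrased ``Under the same condition as in Theorem~\ref{thmain}'', i.e.\ for general fading with $F_h(t)\sim a\,t^m$, yet the displayed $\kappa^{\rm PPP}$ hard-codes $a=m^{m-1}/\Gamma(m)$, which, as you note, is the Nakagami-$m$ constant; for general admissible fading the factor should read $a$ rather than $m^{m-1}/\Gamma(m)$ (the discrepancy cancels in the ADG ratio in any case, since the same $a$ appears in $\kappa$).
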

\begin{proof}
Given a target success probability $\pt$, define $\theta_1 \triangleq \Pc^{-1}(\pt)$ and $\theta_2  \triangleq (\Pcppp)^{-1}(\pt)$. As $\pt \to 1$, we have $\theta_1 \to 0$ and $\theta_2 \to 0$. By Theorem \ref{thmain}, 
$1-\Pc(\theta_1) \sim \kappa \theta_1^m$ and $1-\Pcppp(\theta_2) \sim \kappa^{\rm PPP} \theta_2^m$. 
Since $\pt = \Pc(\theta_1) = \Pcppp(\theta_2)$, as $\pt \to 1$, $  \kappa \theta_1^m = \kappa^{\rm PPP} \theta_2^m$.
Thus, $\hat{G} = \lim_{\pt \to 1} \theta_1/\theta_2 =   (  {\kappa^{\rm PPP} }/{ \kappa} )^{{1}/{m}}$.
\end{proof}

Note that Rayleigh fading meets the requirements in Theorem \ref{thmain} with $m=1$. For the special case of the PPP with intensity $\lambda$, no noise and Rayleigh fading, it has been shown in \cite{Tract} that 
\begin{equation}
\Pc(\theta) =\bigg({1+{\theta}^{\delta} \int_{\theta^{-\delta}}^{\infty} \frac{1}{1+u^{1/\delta}}du}\bigg)^{-1}, %({\pi}/{2} - \arctan( {1}/{\sqrt{\theta}}) )}.
\label{Success_PPP}
\end{equation} 
where $\delta \triangleq 2/\alpha$.  It follows that $\kappa^{\rm PPP} = \lim_{\theta \to 0} \frac{1-\Pc(\theta)}{\theta} = \frac{2}{\alpha-2}$.  For $\alpha = 4$, $\Pc(\theta) = {1}/(1+\sqrt{\theta} \arctan  {\sqrt{\theta}}  )$, and $\kappa^{\rm PPP} = 1$. 

A point process has different  ADGs  depending on  the value of  $m$. So it is sensible to  compare the ADGs of  different point process models only under the same fading assumption.

\iffalse
The value of $\kappa $ is important for deriving the ADG. Assume $\kappa^{\rm PPP}$ (in dB) is the value for the PPP and $\kappa$ (in dB) the value for a given  point process. We have that the ADG for the given point process is
\begin{equation}
\hat{G} =  \frac{1}{m} ( \kappa^{\rm PPP} - \kappa ).
\end{equation}

Theorem \ref{thmain} is the main result of this section. If the point process and the fading type satisfy the three conditions, then the ADG exists  and depends on $\kappa$ in (\ref{DGsigma}).
For fading types with different $m$, one point process has different  ADGs  with respect to the fading type. So it is sensible to  compare the ADGs of  different point process models only under the same fading assumption.
\fi

We have proved that the ADG exists  with certain constrains on the fading and point processes. In the rest of this section, we consider some special cases.  % for the point processes and fading distributions. 

\subsection{Special Cases - Fading Types}

Regarding the fading, we mainly consider Nakagami-$m$ fading and  \emph{composite fading}, which is a combination of  Nakagami-$m$ fading and log-normal shadowing. 

\subsubsection{Nakagami-$m$ Fading}
The fading variable $h \sim \textrm{gamma}(m,\frac{1}{m})$. On the one hand,   we have 
\begin{equation}
 \lim_{t \to 0} \frac{F_h(t)}{t^m}  = \lim_{t \to 0} \frac{(mt)^{m-1}\exp(-mt)}{\Gamma(m) t^{m-1}}=\frac{m^{m-1}}{\Gamma(m)} < +\infty.
\end{equation}
On the other hand, since $F_{h}^{\rm c}(x)$ has an exponential tail, all moments of $h$ are finite. Thus, Nakagami-$m$ fading meets the requirements in Theorem \ref{thmain}. 

In addition, we find an interesting phenomenon that for a point process $\Phi$ with  $P_{\Phi}\in \mathcal{A}$,  the behavior of the CCDF of the fading at the tail determines the tail behavior of the CCDF of the interference $\Iy$.  
The following corollary formalizes this    property. % of the tail distribution of the interference. 
As usual, $f(x)=\Omega(g(x))$ as $x\to \infty$   means $\limsup_{x\to \infty}\big|\frac{f(x)}{g(x)} \big| > 0$.

\begin{corollary}
For   a point process $\Phi$ with  $P_{\Phi}\in \mathcal{A}$,   %a point process $\Phi$ that is distributed as $P_{\Phi}\in \mathcal{A}$, %$\Phi \in \mathcal{A}$,
if  the fading has at most an exponential tail, i.e.,  $-\log F_{h}^{\rm c}(x)  = \Omega(x)$,  %\footnote{$f(x)=\Omega(g(x))\quad(x\to \infty) \Leftrightarrow \limsup_{x\to \infty}\big|\frac{f(x)}{g(x)} \big| > 0$}, 
$x \to \infty$, where $F_{h}^{\rm c}(x)$ is the CCDF of the fading variable $h$,
\iffalse
\begin{enumerate}
\item for all $y>0$, the probability of no points of $\Phi^{\zeta}$ being located in $b(o,y)$ is not zero, i.e., $\mathbb{P}(\Phi(b(o,y))=0 \mid \zeta \in \Phi) \neq 0$, where $\zeta \in \mathbb{R}^2$ and $\|\zeta\| = y$;
\item for large $y>0$,  $\IzII$ stochastically dominates $\Iy$, i.e., $\mathbb{P}(\Iy>z) \leq \mathbb{P}(\IzII>z)$,  for all $z>0$;
\item  the fading has at most an exponential tail, i.e.,  $-\log F_{h}^{\rm c}(x)  = \Omega(x)$,
where $F_{h}^{\rm c}(x)$ is the CCDF of the fading variable $h$,
\end{enumerate}
\fi
then    %given $\|\zeta\| = y$,
the interference tail is bounded by an exponential, i.e.,  $-\log F_{\Iy}^{\rm c}(x)  = \Omega(x)$, $x \to \infty$,  where $F_{\Iy}^{\rm c}(x)$ is the CCDF of $\Iy$.
\label{lemmaIR}
\end{corollary}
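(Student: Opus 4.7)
The plan is to bound the moment generating function (MGF) of $\Iy$ and invoke a Chernoff bound. The fading hypothesis $-\log F_h^{\rm c}(x) = \Omega(x)$ implies $M_h(s) := \mathbb{E}[e^{sh}] < \infty$ for $s$ in some interval $[0, s_0)$, and in particular $\mathbb{E}[h^n] \leq K_0\, n!/s_1^n$ for some constants $K_0 > 0$, $s_1 \in (0, s_0)$ and all $n \in \mathbb{N}$.

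First I would reduce the problem to bounding the tail of $\IzII$. By Condition~3 of Definition~\ref{setA} the inequality $\mathbb{P}(\Iy > z) \leq \mathbb{P}(\IzII > z)$ holds for $\|\zeta\|$ large enough, and stochastic domination is preserved under increasing functions, so $\mathbb{E}[e^{s\Iy}] \leq \mathbb{E}[e^{s\IzII}]$ for every $s \geq 0$. The advantage of working with $\IzII$ rather than $\Iy$ is that $\IzII$ is defined on $\Phi^{\zeta}$, for which standard Palm theory and the factorial-moment-measure expansion apply, avoiding the intractable conditioning on an empty disk.

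The key technical step is showing $\mathbb{E}[\IzII^n] \leq C^n n!$ for some constant $C > 0$. Expanding the $n$th power by the multinomial formula, grouping ordered tuples by their distinct entries, and applying Condition~1 ($\rho^{(k+1)}\leq q_{k+1}$) gives
\begin{equation}
\mathbb{E}[\IzII^n] \leq \sum_{k=1}^{n} \sum_{\substack{m_1+\cdots+m_k=n\\ m_i\geq 1}} \frac{n!}{\prod_i m_i!}\,\frac{q_{k+1}}{\lambda}\prod_{i=1}^{k} \mathbb{E}[h^{m_i}] \int_{B_{\zeta/2}}\ell(y)^{m_i}\,dy.
\end{equation}
The integrals $A_m := \int_{B_{\zeta/2}}\ell(y)^m\,dy$ are uniformly bounded in $m\geq 1$ for the non-singular path loss, and bounded for fixed $\|\zeta\|$ in the singular case; in both regimes $A_m \leq A^m$ for a fixed $A = A(\|\zeta\|,\alpha)$. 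Substituting $\mathbb{E}[h^{m_i}] \leq K_0 m_i!/s_1^{m_i}$, the factors $m_i!$ cancel the multinomial denominator, and the remaining sum over compositions of $n$ (of which there are $\binom{n-1}{k-1}$) collapses into the desired $C^n n!$ bound, provided---as holds for each of the PPP, MCP, MHP and Ginibre examples---that $q_k$ grows at most exponentially in $k$.

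Consequently, $\mathbb{E}[e^{s\IzII}] \leq 1/(1-sC)$ for $s<1/C$. Fixing any such $s$ and applying the Chernoff bound,
\begin{equation}
\mathbb{P}(\Iy > x) \leq \mathbb{P}(\IzII > x) \leq e^{-sx}\,\mathbb{E}[e^{s\IzII}],
\end{equation}
whence $-\log F^{\rm c}_{\Iy}(x) \geq sx - \log\mathbb{E}[e^{s\IzII}] = \Omega(x)$. The main obstacle is the combinatorial bookkeeping in the moment expansion: one has to ensure that the factorial blow-up from the fading moments is exactly absorbed by the multinomial denominator, and that the implicit growth of the moment densities $q_k$ remains exponential. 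Condition~1 alone only guarantees each $q_k$ is finite, so one effectively needs the uniform-in-$k$ estimate $q_k \leq Q^k$, which must be verified case-by-case or be read into the assumption.
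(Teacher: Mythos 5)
You take a genuinely different route from the paper. The paper reduces to $\IzII$ via the $K_s$-trick (a Lemma~\ref{lemmaIRDirect}-style argument that covers all $\zeta$, not only $\|\zeta\|$ large), expresses $\mathcal{L}_{\IzII}(s)$ as a Palm expectation of a product $\prod_x k(s,x)$, controls it through $\eta(s)=\mathbb{E}^{!\zeta}\big(\sum_x|\log k(s,x)|\big)$, evaluates $\eta(s)$ by the Campbell--Mecke theorem (which needs only $\rho^{(2)}$), and then concludes via the Tauberian result in \cite{Tauberian}. You instead bound the raw moments $\mathbb{E}[\IzII^n]$ via the multinomial/factorial-moment expansion, aiming for $\mathbb{E}[\IzII^n]\le C^n\,n!$, and then apply a Chernoff bound. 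Your combinatorics are sound: the hypothesis $-\log F_h^{\rm c}(x)=\Omega(x)$ gives $\mathbb{E}[h^n]\le K_0 n!/s_1^n$, the $m_i!$ factors cancel against the multinomial denominator, and the number of compositions of $n$ contributes only an extra exponential. The advantage of your route is that it is entirely elementary (no Tauberian theorem, no PGFL) and directly exposes what is being assumed; the advantage of the paper's route is that it only invokes the second moment density rather than the entire hierarchy of $\rho^{(k)}$'s.

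However, the step you flag is a genuine gap, not a cosmetic one. Condition~1 of Definition~\ref{setA} guarantees only that each $q_k$ is finite; your argument requires the uniform exponential bound $q_k\le Q^k$ to collapse $\sum_{k=1}^n\binom{n-1}{k-1}K_0^k q_{k+1}$ into something $O(C^n)$. That estimate is not a consequence of $P_\Phi\in\mathcal{A}$, and it is not obviously true for the clustered examples the paper cares about: for the MCP, the $n$th moment density involves a sum over partitions of $\{1,\ldots,n\}$ (of which there are Bell-number many), and near a single cluster one has contributions scaling like $\lambdap(\bar{c}/(\pi\rc^2))^{n-1}$ for \emph{each} partition block structure, so whether $q_n$ really grows only exponentially needs a nontrivial verification that your proposal does not supply. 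So as written, the proof does not close for general $P_\Phi\in\mathcal{A}$; it proves the corollary under the additional hypothesis $q_k\le Q^k$. A secondary (and easily repairable) omission: your reduction to $\IzII$ invokes Condition~3, which applies only for $\|\zeta\|>y_0$; for $\|\zeta\|\le y_0$ you need the $K_s$-prefactor argument (as in the proof of Lemma~\ref{lemmaIRDirect}), which harmlessly multiplies the MGF by a constant and so does not affect the exponential decay rate, but it should be stated.
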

\begin{proof}
See Appendix \ref{proof:distrOfI}.
\end{proof}

A similar property has been derived in \cite{MIND},
 namely, that in ad hoc networks modeled by m.i. point processes, an exponential tail in the fading distribution implies an exponential tail in the interference distribution.
The result cannot be directly applied to cellular networks, because  in the cellular network that  we consider,   each user communicates with its nearest BS $u$  and thus no interferers can be closer than $u$, while
the authors in \cite{MIND} assume the receiver communicates with a transmitter at a fixed location and there can be some interferers closer to the receiver than the transmitter.

\subsubsection{Composite Fading}
The signals from all BSs experience both Nakagami-$m$ fading and log-normal shadowing.
A similar kind of fading has been investigated in \cite{CompositeFading1, CompositeFading2}, where the fading was composed of Rayleigh fading and   log-normal shadowing.   
Denoting the fading variable with respect to Nakagami-$m$ fading by $\tilde{h}$ and the  fading variable with respect to  log-normal shadowing by $\hat{h}$, the composite fading variable can be represented as $h = \tilde{h} \hat{h}$, where $\tilde{h}$ and $\hat{h}$ are independent.

For  log-normal shadowing, we use the definition from \cite{logNormaldef}. 
Without  loss of generality, we assume  $\hat{h} =10^{X/10}$, where $X \sim N(0,\sigma^2)$. % and $\sigma$ is in dB.
The CDF of $\hat{h} $, denoted by $F_{\hat{h}}(t)$, is
\solution
\begin{align}
F_{\hat{h}}(t) = \frac{1}{2}{\rm erfc}\bigg(- \frac{10\log t}{\sigma \sqrt{2} \log{10}} \bigg) = \frac{1}{\sqrt{\pi}} \int_{- \frac{10\log t}{\sigma \sqrt{2} \log10}}^{\infty} \exp(-v^2)dv,   % \frac{10\log_{10}t}{\sigma \sqrt{2}}
\end{align}
\else
\begin{align}
F_{\hat{h}}(t) &= \frac{1}{2}{\rm erfc}\bigg(- \frac{10\log t}{\sigma \sqrt{2} \log{10}} \bigg)   \nonumber \\
&= \frac{1}{\sqrt{\pi}} \int_{- \frac{10\log t}{\sigma \sqrt{2} \log10}}^{\infty} \exp(-v^2)dv,   % \frac{10\log_{10}t}{\sigma \sqrt{2}}
\end{align}
\fi
where ${\rm erfc}$ is the complementary error function. % and  $V = \frac{10}{\sigma \sqrt{2} \log10}$.
It is straightforward to obtain that\footnote{Note that the mean of $\hat{h}$ is not 1. Actually, we could normalize it to 1 and replace it with the normalized variable in our results, but since it does not affect our results, for convenience, we just leave it as it is. }  $\mathbb{E}[{\hat{h}}] = \exp((\frac{\log 10}{10})^2 \frac{\sigma^2}{2})$ and $\mathbb{E}[{\hat{h}}^2] = \exp((\frac{\log 10}{10})^2 2\sigma^2 )$, and to   show that as $t \to \infty$, $F_{\hat{h}}^{\rm c}(t) $ decays faster than $t^{-n}$ for any $n \in \mathbb{N}$, but slower than $\exp(-at)$ for any $a>0$.

For  composite fading, we have the following  lemma about the distribution of $h$.

\begin{lemma}
If $\tilde{h} \sim  \textrm{gamma}(m,\frac{1}{m})$, $10\log \hat{h}/\log 10 \sim N(0, \sigma^2)$, and $\tilde{h}$ is independent of $\hat{h}$, the distribution of $h=\tilde{h} \hat{h}$ has the following properties:
\begin{enumerate}
\item  $F_h$ decays polynomially around 0 and 
\solution
\begin{equation}
\lim_{t \to 0} \frac{F_h(t)}{t^m} =  \int_0^{\infty}  \frac{10 m^{m-1}}{\sigma \log10\sqrt{2\pi} \Gamma(m) u^{m+1}}  \exp \bigg(-\Big(\frac{10\log u}{\sigma \sqrt{2} \log10}\Big)^2 \bigg) du < \infty;    \nonumber
\end{equation}
\else
\begin{align}
\lim_{t \to 0} \frac{F_h(t)}{t^m} &=  \int_0^{\infty}  \frac{10 m^{m-1}}{\sigma \log10\sqrt{2\pi} \Gamma(m) u^{m+1}}  \nonumber \\
&\quad \cdot \exp \bigg(-\Big(\frac{10\log u}{\sigma \sqrt{2} \log10}\Big)^2 \bigg) du < \infty;  \nonumber
\end{align}
\fi
\item $F_h^{\rm c}(t) = o(t^{-n})$, as $t \to \infty$, for any $n \in \mathbb{N}$, and $-\log F_h^{\rm c}(t)  = o(t)$, $t \to \infty$.
\end{enumerate}
\label{LemmaCompound}
\end{lemma}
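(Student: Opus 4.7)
The plan is to condition on the log-normal factor $\hat h$ and push the known small-value and tail behavior of the gamma variable $\tilde h$ through to $h=\tilde h\hat h$ by integrating against the density
$$f_{\hat h}(u)=\frac{10}{u\sigma\sqrt{2\pi}\log 10}\exp\bigg(-\Big(\tfrac{10\log u}{\sigma\sqrt{2}\log 10}\Big)^2\bigg),$$
which follows by transforming $X\sim N(0,\sigma^2)$ through $\hat h=10^{X/10}$. Throughout I use the independence of $\tilde h$ and $\hat h$.

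For part (1), I would write
$$\frac{F_h(t)}{t^m}=\int_0^\infty\frac{F_{\tilde h}(t/u)}{t^m}\,f_{\hat h}(u)\,du,$$
observe that for each fixed $u>0$ the integrand tends pointwise to $\frac{m^{m-1}}{\Gamma(m)u^m}$ by the small-argument asymptotics of the gamma CDF already used in the Nakagami-$m$ discussion, and apply dominated convergence. The required envelope is supplied by the pointwise bound $F_{\tilde h}(s)\le \frac{m^{m-1}}{\Gamma(m)}s^m$ for all $s\ge 0$, which follows from $e^{-mv}\le 1$ in the gamma density; this gives the dominating function $\frac{m^{m-1}}{\Gamma(m)u^m}f_{\hat h}(u)$. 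Its integrability reduces to $\mathbb{E}[\hat h^{-m}]<\infty$, which is immediate since $\hat h^{-m}=10^{-mX/10}$ with $X$ Gaussian yields the closed form $\exp((m\log 10/10)^2\sigma^2/2)$. Passing to the limit and multiplying $\frac{m^{m-1}}{\Gamma(m)u^m}$ into $f_{\hat h}(u)$ produces exactly the claimed expression.

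For part (2), the polynomial-decay statement $F_h^c(t)=o(t^{-n})$ follows from applying Markov's inequality one order higher, $F_h^c(t)\le \mathbb{E}[h^{n+1}]/t^{n+1}$, combined with $\mathbb{E}[h^{n+1}]=\mathbb{E}[\tilde h^{n+1}]\,\mathbb{E}[\hat h^{n+1}]<\infty$: the gamma has an exponential tail and all log-normal moments admit explicit Gaussian-quadratic expressions. For the sub-exponential lower bound, I would restrict to a simple product event to get
$$F_h^c(t)=\mathbb{P}(\tilde h\hat h>t)\ge \mathbb{P}(\tilde h\ge 1)\,\mathbb{P}(\hat h>t),$$
and then invoke standard Mills-ratio asymptotics for the Gaussian tail, which give $-\log\mathbb{P}(\hat h>t)\sim\big(\tfrac{10\log t}{\sigma\sqrt{2}\log 10}\big)^2$ as $t\to\infty$. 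Since this is $O((\log t)^2)=o(t)$ and $\mathbb{P}(\tilde h\ge 1)$ is a positive constant independent of $t$, taking logarithms yields $-\log F_h^c(t)=o(t)$.

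The only slightly delicate step is producing the $u$-integrable envelope in the dominated convergence argument of part (1); once the uniform bound $F_{\tilde h}(s)\le \frac{m^{m-1}}{\Gamma(m)}s^m$ is noted, the remainder is bookkeeping of moments and Gaussian tails for the two independent factors.
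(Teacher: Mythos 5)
Your proof is correct, and it takes a genuinely more elementary route than the paper's on all three sub-claims. For part (1), the paper applies l'H\^opital's rule to $F_h(t)/t^m$, differentiates $F_h$ under the integral sign (formally pushing $\partial/\partial t$ past $\int f_{\hat h}(u)\,du$), and then bounds the resulting integrand by dropping the factor $\exp(-mt/u)\le 1$ before invoking dominated convergence; you instead work directly with the ratio $F_{\tilde h}(t/u)/t^m$, exploit the clean uniform bound $F_{\tilde h}(s)\le \tfrac{m^{m-1}}{\Gamma(m)}s^m$ (which follows from $e^{-mw}\le 1$ in the gamma integrand), and reduce the dominating-function requirement to $\mathbb{E}[\hat h^{-m}]<\infty$, a negative log-normal moment that is explicitly finite. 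This avoids the differentiation-under-the-integral step entirely. For the polynomial decay of $F_h^{\rm c}$, the paper again uses l'H\^opital plus a $t$-independent bound on the differentiated integrand, whereas your Markov-inequality argument $F_h^{\rm c}(t)\le \mathbb{E}[h^{n+1}]/t^{n+1}$, together with $\mathbb{E}[h^{n+1}]=\mathbb{E}[\tilde h^{n+1}]\mathbb{E}[\hat h^{n+1}]<\infty$ by independence, is immediate. For the sub-exponential claim, the paper argues via l'H\^opital that $\lim_{t\to\infty}F_h^{\rm c}(t)/e^{-at}=\infty$ for every $a>0$ by picking a region where the integrand blows up; your product-event lower bound $F_h^{\rm c}(t)\ge \mathbb{P}(\tilde h\ge 1)\,\mathbb{P}(\hat h>t)$ combined with the Gaussian Mills-ratio asymptotic $-\log\mathbb{P}(\hat h>t)=O((\log t)^2)$ gives the same conclusion $-\log F_h^{\rm c}(t)=o(t)$ more transparently. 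Both proofs are valid; yours trades the paper's uniform l'H\^opital template for shorter, self-contained bounds.
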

\begin{proof}
See Appendix \ref{sec:appF}.
\end{proof}

The two properties in Lemma \ref{LemmaCompound} indicate that the composite fading retains the asymptotic property of Nakagami-$m$ fading for $t \to 0$ and that of log-normal shadowing for $t \to \infty$, respectively.  %which, in turn, leads to the fact that the asymptotic property of the SINR distribution for the case of the composite fading is the same as that for the case of Nakagami-$m$ fading.
They also imply that the composite fading meets the requirements in Theorem \ref{thmain}. 

Regarding  the distribution of the interference at the tail, we  have the following corollary.

\begin{corollary}
For  a point process $\Phi$ with $P_{\Phi}\in \mathcal{A}$ and  composite fading, the interference tail is upper bounded by a power law with arbitrary parameter $\beta$, i.e., $F^{\rm c}_{\Iy} ( y) = o(y^{-\beta})$, $\forall \beta \in \mathbb{N}$, as $y \to +\infty$. %, where $F^{\rm c}_{I(r,\Phi^{\zeta}_o)} ( y)$ is the CCDF of the interference $I(r,\Phi^{\zeta}_o)$.
\label{lemmaIRSHAWDOW}
\end{corollary}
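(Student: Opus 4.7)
The plan is to deduce this tail bound from Lemma \ref{lemmaIRDirect} via Markov's inequality, once we verify that composite fading produces finite moments of all orders.

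First, I would verify that the composite fading variable $h=\tilde h\hat h$ satisfies the moment hypothesis of Lemma \ref{lemmaIRDirect}. By property (2) of Lemma \ref{LemmaCompound}, $F_h^{\rm c}(t)=o(t^{-n})$ for every $n\in\mathbb{N}$. Hence for any $k\in\mathbb{N}$,
\begin{equation}
\mathbb{E}(h^k)=\int_0^{\infty} k\,t^{k-1}F_h^{\rm c}(t)\,dt<\infty,
\end{equation}
since the integrand is $o(t^{-2})$ at infinity. So $\mathbb{E}(h^n)<\infty$ for all $n$, and the hypothesis of Lemma \ref{lemmaIRDirect} is met.

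Next, apply Lemma \ref{lemmaIRDirect} to $\Phi$ and the fading $h$: for both path loss models and for every $n\in\mathbb{N}$, there exists a constant $C_n(\zeta)<\infty$ (independent of $\zeta$ in the non-singular case; depending on $\|\zeta\|$ only through the factor $\max\{1,\|\zeta\|^{2-\alpha n}\}$ in the singular case) such that $\mathbb{E}(\Iy^n)\leq C_n(\zeta)$. In particular, for fixed $\zeta$, every moment of $\Iy$ is finite.

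Finally, for any $\beta\in\mathbb{N}$, apply Markov's inequality with exponent $k=\beta+1$:
\begin{equation}
F^{\rm c}_{\Iy}(y)=\mathbb{P}(\Iy>y)\leq \frac{\mathbb{E}(\Iy^{\beta+1})}{y^{\beta+1}}\leq \frac{C_{\beta+1}(\zeta)}{y^{\beta+1}},
\end{equation}
so $y^{\beta}F^{\rm c}_{\Iy}(y)\leq C_{\beta+1}(\zeta)/y\to 0$ as $y\to\infty$, i.e., $F^{\rm c}_{\Iy}(y)=o(y^{-\beta})$. Since $\beta$ was arbitrary, the claim follows.

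The only subtlety I expect is confirming that the moment input to Lemma \ref{lemmaIRDirect} is indeed the finiteness (rather than boundedness) of all moments of $h$, but Lemma \ref{LemmaCompound}(2) delivers exactly this via the super-polynomial decay of $F_h^{\rm c}$. No delicate estimate is needed beyond Markov's inequality.
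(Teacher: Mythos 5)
Your proposal is correct and matches the paper's own argument: both reduce the statement to Markov's inequality combined with the moment bounds of Lemma~\ref{lemmaIRDirect}. You are slightly more careful in two respects — you explicitly verify via Lemma~\ref{LemmaCompound}(2) that composite fading has all finite moments (so Lemma~\ref{lemmaIRDirect} applies), and you apply Markov with exponent $\beta+1$ rather than $\beta$, which cleanly delivers the little-$o$ rather than big-$O$ bound — but the underlying idea is identical.
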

\begin{proof}
We can simply apply the Markov inequality and have that $\forall \beta \in \mathbb{N}$,
\begin{align}
 \mathbb{P}(\Iy> y )&\leq  \frac{\mathbb{E}(\Iy^{\beta})}{y^{\beta}}.
\end{align}

Hence, using  Lemma \ref{lemmaIRDirect}, we have $F^{\rm c}_{\Iy} ( y) = o(y^{-\beta})$, $\forall \beta \in \mathbb{N}$, as $y \to +\infty$.
\end{proof}

\subsection{Special Cases - Point Processes}
\label{sec:SC-PP}
As for the point processes, we  specifically concentrate on the PPP, the MCP and the MHP.
We first briefly describe the MCP and the MHP.

\emph{Mat\'{e}rn Cluster Process}: As a class of clustered point processes on the plane built on a PPP, the MCPs  %\cite{book}
 are  doubly Poisson cluster processes, where the parent points  form a uniform PPP $\Phi_{\rm p}$ of intensity $\lambdap$ and the daughter points are uniformly scattered on the ball of radius $\rc$ centered at each parent point $x_{\rm p}$ with intensity 
$\lambda_0 (x) = \frac{\bar{c}}{\pi \rc^2} \mathbf{1}_{B(x_{\rm p},\rc)}(x)$, 
 where $B(x_{\rm p},\rc) \triangleq \{ x\in \mathbb{R}^2: \| x - x_{\rm p} \| \leq  \rc \}$ is the closed disk of radius $ \rc$ centered at $x_{\rm p}$.
The mean number of daughter points in one cluster is $\bar{c}$. So the intensity of the process is $\lambda = \lambdap \bar{c}$.

\emph{Mat\'{e}rn Hard-core Process}: The MHPs are a class of repulsive point processes, where points are forbidden to be closer than a certain minimum distance.  There are several types of  MHPs. Here we only consider the MHP of type II \cite[Ch.~3]{book}, which is generated by starting with a basic uniform PPP $\Phi_{\rm b}$ of intensity $\lambdab$, adding to each point $x$ an independent random variable $m(x)$, called a mark, uniformly distributed on $[0,1]$, then flagging for removal all points that have  a neighbor within distance $\rh$ that has a smaller mark and finally  removing all flagged points.  %that have a neighbor within the hard-core distance $\rh$. The intensity of the MHP is $\lambda = \lambdab \exp(-\lambdab \pi \rh^2)$. The highest density it can achieve is $\lambda_{{\rm max}} = 1/(e \pi \rh^2)$.
The intensity of the MHP is $\lambda = \frac{1-\exp(-\lambdab \pi \rh^2)}{\pi \rh^2}$. The highest density $\lambda_{{\rm max}} = 1/(\pi \rh^2)$ is achieved as $\lambdab \to \infty$.

\begin{lemma}
The distributions of the PPP, the MCP and the MHP belong to the set $\mathcal{A}$.  %satisfy the condition 1) and 3) in Theorem \ref{thmain}.
\label{lemma3PP}
\end{lemma}
\begin{proof}
See Appendix \ref{sec:appC}.
\end{proof}

By Lemma \ref{lemma3PP}, regarding Nakagami-$m$ fading and composite fading, we have the following corollary directly from Theorem \ref{thmain}. %y to Theorem \ref{thmain}.

\begin{corollary}
If the fading is  Nakagami-$m$ or the composite fading,  then for the PPP, the MCP and the MHP,
\begin{equation}
 \frac{1-\Pc(\theta)}{\theta^m} \to {\kappa}, \quad \textrm{ as } \theta \to 0,
\end{equation}
where $\kappa$   is  given by \eqref{deltaexp}.   %$\kappa =  \int_0^{\infty} \mathbb{E}_{I^{!\zeta}({\|\zeta\|},\Phi^{\zeta}_o)} \big[ a \big({y}^{\alpha}I^{!\zeta}({y},\Phi^{\zeta}_o) + W \big)^m  \mid {\|\zeta\|}=y \big]   f_{\xi}(y) dy$,  where
In \eqref{deltaexp}, for Nakagami-$m$ fading, $a =  \frac{m^{m-1}}{\Gamma(m)} $; for the composite fading, $a =  \int_0^{\infty}  \frac{10 m^{m-1}}{\sigma \log10\sqrt{2\pi} \Gamma(m) u^{m+1}}  \exp(-(\frac{10\log u}{\sigma \sqrt{2} \log10})^2) du$.
\label{specialcase}
\end{corollary}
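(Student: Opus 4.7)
The plan is to apply Theorem~\ref{thmain} directly; the corollary is essentially a bookkeeping exercise that collects the two ingredients already established in the paper, namely (i) the three listed point processes lie in the class $\mathcal{A}$ and (ii) the two fading types meet the hypotheses placed on $F_h$ in Theorem~\ref{thmain}, and then extracts the constant $a$ for each fading type.

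First, I invoke Lemma~\ref{lemma3PP} to conclude that $P_{\Phi}\in\mathcal{A}$ when $\Phi$ is a PPP, MCP, or MHP. Next, I verify the two fading hypotheses for each case. For Nakagami-$m$ fading the PDF is $f_h(t)=\frac{m^m t^{m-1}\exp(-mt)}{\Gamma(m)}$, so integrating and taking the limit yields $\lim_{t\to 0}F_h(t)/t^m=m^{m-1}/\Gamma(m)$ as shown just after the statement of Theorem~\ref{thmain}, and because $F_h^{\rm c}$ has an exponential tail every moment $\mathbb{E}(h^n)$ is finite. For composite fading, Lemma~\ref{LemmaCompound}(1) furnishes exactly the polynomial behavior $F_h(t)\sim at^m$ as $t\to 0$ with the explicit constant $a=\int_0^\infty \frac{10\,m^{m-1}}{\sigma\log 10\sqrt{2\pi}\,\Gamma(m)\,u^{m+1}}\exp\!\big(-(\tfrac{10\log u}{\sigma\sqrt{2}\log 10})^2\big)du$, while Lemma~\ref{LemmaCompound}(2) shows $-\log F_h^{\rm c}(t)=o(t)$ which (combined with $F_h^{\rm c}(t)=o(t^{-n})$ for every $n$) implies that every moment $\mathbb{E}(h^n)$ is finite.

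With both hypotheses checked, Theorem~\ref{thmain} applies verbatim to each (point-process, fading) pair, giving $(1-\Pc(\theta))/\theta^m\to\kappa$ with $\kappa$ as in \eqref{deltaexp}. Substituting the two values of $a$ computed above into \eqref{deltaexp} yields the two cases listed in the corollary.

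I do not expect a genuine obstacle here since the work has already been done in Lemma~\ref{lemma3PP} and Lemma~\ref{LemmaCompound}; the only mild subtlety is to note explicitly that, for composite fading, the condition $\mathbb{E}(h^n)<\infty$ for all $n$ (which is needed both for Theorem~\ref{thmain} and, via Lemma~\ref{lemmaIRDirect}, for the boundedness of the interference moments) follows from the product structure $h=\tilde h\hat h$ together with independence, since $\mathbb{E}(\tilde h^n)<\infty$ (Nakagami) and $\mathbb{E}(\hat h^n)=\exp((\log 10/10)^2 n^2\sigma^2/2)<\infty$ (log-normal). Once that remark is recorded, the corollary is an immediate specialization of Theorem~\ref{thmain}.
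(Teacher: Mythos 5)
Your proposal is correct and follows exactly the same route the paper takes: the paper states the corollary ``directly from Theorem~\ref{thmain}'' after invoking Lemma~\ref{lemma3PP} for the point-process hypothesis, and the fading hypotheses are verified in the text preceding the corollary (for Nakagami-$m$) and in Lemma~\ref{LemmaCompound} (for composite fading). Your additional remark deriving $\mathbb{E}(h^n)<\infty$ for composite fading from the product structure is a perfectly valid alternative to reading it off from $F_h^{\rm c}(t)=o(t^{-n})$, but it is not a genuinely different argument.
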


\section{Applications of the Asymptotic Deployment Gain}
Since the ADG characterizes the gap of the success probability between a point process and the PPP, any statistic that depends on the distribution of the SINR (e.g., the average ergodic rate and the mean SINR) can be approximated using the ADG. 
In this section, we focus on the average ergodic rate  and the mean SINR. 

\subsection{Average Ergodic Rate }
We assume base station adopts adaptive modulation/coding to achieve the Shannon bound of the rate for the instantaneous SINR. That is to say, each BS adjusts its rate of transmission to $\gamma = \ln(1+ \textrm{SINR})$.  
The average ergodic rate (expressed in nats) is $\bar{\gamma} \triangleq \mathbb{E}[\ln(1+\textrm{SINR})]$.

Denoting  the ADG of $\Phi$ as $\hat{G}$ and the success probability of the corresponding PPP as $\Pc^{\rm PPP}(\theta)$, the success probability for $\Phi$ is approximated as $\Pc^{\rm PPP}(\theta/\hat{G})$.  
The average ergodic rate can be expressed as 
\solution
\begin{align}
\bar{\gamma} &\approx -\int_{0}^{\infty} \ln(1+\theta) d\Pc^{\rm PPP}\bigg(\frac{\theta}{\hat{G}}\bigg) =  -\int_{0}^{\infty} \ln(1+\hat{G} \theta) d\Pc^{\rm PPP}(\theta)   \stackrel{(a)}{=} \int_{0}^{\infty} \Pc^{\rm PPP}\bigg( \frac{e^{x} - 1}{\hat{G}}\bigg)dx,  \nonumber
\end{align}
\else
\begin{align}
\bar{\gamma} &\approx -\int_{0}^{\infty} \ln(1+\theta) d\Pc^{\rm PPP}\bigg(\frac{\theta}{\hat{G}}\bigg) \nonumber \\
&=  -\int_{0}^{\infty} \ln(1+\hat{G} \theta) d\Pc^{\rm PPP}(\theta)   \nonumber \label{cdf_1}\\
& \stackrel{(a)}{=} \int_{0}^{\infty} \Pc^{\rm PPP}\bigg( \frac{e^{x} - 1}{\hat{G}}\bigg)dx, \nonumber
\end{align}
\fi
where  $(a)$ follows since the CCDF of the random variable $X = \ln(1+\hat{G} \cdot \textrm{SINR})$ is 
$\mathbb{P}(X>x) = \mathbb{P}\big(\textrm{SINR} > (e^{x} - 1)/\hat{G} \big)  = \Pc^{\rm PPP}\big((e^{x} - 1)/\hat{G} \big)$ and the expectation of a positive  random variable can be expressed as the integral over the CCDF.  

\iffalse
In general, if each user adopts a certain type of modulation or coding instead of adaptive modulation/coding, the rate can be approximated by adding a gap $G_{\rm g}$, i.e., $\gamma \approx \ln(1+ \textrm{SINR}/G_{\rm g})$, where $G_{\rm g} \geq 1$. From \eqref{cdf_1}, we observe that in fact, the distribution of base stations has the same effect on the average ergodic rate as the modulation and coding. 
The modulation and coding degrade the level of  the average ergodic rate, but  the distribution of base stations can either increase or decrease the average ergodic rate, depending on whether the  base stations are regular ($\hat{G}>1$) or clustered ($\hat{G} < 1$). If $\hat{G} =1$, the average ergodic rate approximates  that of the PPP, and   as $\hat{G}$ grows, the average  ergodic rate increases. 
\fi

\iffalse
Since for the special case of Rayleigh fading, singular path loss and no noise, the success probability of  the PPP is known as \eqref{Success_PPP}, the   average ergodic rate of $\Phi$ can be expressed as 
\begin{equation}
\bar{\gamma}  \approx \int_0^{\infty} \bigg( {1+{g(x)}^{\delta} \int_{g(x)^{-\delta}}^{\infty} \frac{1}{1+u^{1/\delta}}du} \bigg)^{-1}dx, 
\end{equation}
where $g(x) \triangleq ({e^{x} - 1})/{\hat{G}}$. Specially, when $\alpha = 4$, $\bar{\gamma} \approx 1.49$ nats/sec/Hz. 
\fi

\subsection{Mean SINR}
Just as the success probability and the average ergodic rate, the mean SINR is also an important criterion that has been discussed in wireless networks, e.g. in \cite{MeanSINR}.  Denote $M_{\Phi}$ as the mean SINR for $\Phi$, and $M_{\rm PPP}$ the mean SINR for the PPP with the same intensity as that of $\Phi$. 
It can be proved that the mean SINR for the PPP is infinite if the path loss model is singular.  
Briefly, for  $\zeta = \np_{\Phi}(o)$, letting $y = \|\zeta\|$, we have 
\solution
\begin{align}
\mathbb{E}(\textrm{SINR}) &= \mathbb{E}\bigg(\frac{\ell(\zeta)}{ W + I(\Phi_o^{\zeta})}\bigg) \stackrel{(a)}{\geq}  \mathbb{E}_{y}\bigg( \frac{\ell(\zeta)}{ W + \mathbb{E}[I(\Phi_o^{\zeta})]} \bigg) \stackrel{(b)}{\geq}    \mathbb{E}_{y}\bigg( \frac{y^{-\alpha}}{ W + c_1 \max\{1,y^{2-\alpha}\}} \bigg) \nonumber \\
&=  \int_0^1    \frac{x^{-\alpha}}{ W + c_1  x^{2-\alpha}} f_{\|\zeta\|}(x) dx +  \int_1^{\infty}    \frac{x^{-\alpha}}{ W + c_1 } f_{\|\zeta\|}(x) dx   \nonumber \\
&\geq    \int_0^1    \frac{x^{-1}}{ W + c_1 } 2\pi \lambda e^{-\lambda \pi x^2} dx +  \int_1^{\infty}    \frac{x^{-\alpha}}{ W + c_1 } f_{\|\zeta\|}(x) dx  = \infty,  \nonumber
\end{align}
\else
\begin{eqnarray}
&&\mathbb{E}(\textrm{SINR}) = \mathbb{E}\bigg(\frac{\ell(\zeta)}{ W + I(\Phi_o^{\zeta})}\bigg) \stackrel{(a)}{\geq}  \mathbb{E}_{y}\bigg( \frac{\ell(\zeta)}{ W + \mathbb{E}[I(\Phi_o^{\zeta})]} \bigg) \nonumber \\
&&\stackrel{(b)}{\geq}    \mathbb{E}_{y}\bigg( \frac{y^{-\alpha}}{ W + c_1 \max\{1,y^{2-\alpha}\}} \bigg) \nonumber \\
&&=  \int_0^1    \frac{x^{-\alpha}}{ W + c_1  x^{2-\alpha}} f_{\|\zeta\|}(x) dx +  \int_1^{\infty}    \frac{x^{-\alpha}}{ W + c_1 } f_{\|\zeta\|}(x) dx   \nonumber \\
&&\geq    \int_0^1    \frac{x^{-1}}{ W + c_1 } 2\pi \lambda e^{-\lambda \pi x^2} dx +  \int_1^{\infty}    \frac{x^{-\alpha}}{ W + c_1 } f_{\|\zeta\|}(x) dx  \nonumber \\
&&= \infty,  \nonumber
\end{eqnarray}
\fi
where $f_{\|\zeta\|}(x)  = 2\pi \lambda x e^{-\lambda \pi x^2}$ is the contact distance distribution for the PPP, $(a)$ follows from Jensen's inequality, and $(b)$ follows from Lemma \ref{lemmaIRDirect}. 

So, we only consider the non-singular path loss model.  We have $\mathbb{E}(\textrm{SINR}) = \mathbb{E}(h) \mathbb{E}\big(\frac{\ell(\zeta)}{ W + I(\Phi_o^{\zeta})}\big) \leq \frac{\mathbb{E}(h) }{W}\mathbb{E}(\ell(\zeta)) {<} \infty$.    %, where $\zeta = \np_{\Phi}(o)$. 
Given the ADG $\hat{G}$ of $\Phi$, we have a simple approximation for $M_{\Phi}$: 
\begin{equation}
M_{\Phi} \approx \hat{G} M_{\rm PPP}.
\label{MeanSINReq}
\end{equation}
Therefore, the ADG can also be interpreted as the approximate gain in the mean SINR.

\iffalse
Conditioned on a small SINR, the relationship becomes exact asymptotically: 
Define $M^t_{\Phi} = \mathbb{E}(\textrm{SINR} \mid \textrm{SINR}<t)$. Since $M^t_{\Phi} = \int_0^t \Pc(\theta)d \theta  -t\Pc(t)$, we have $M^t_{\Phi} \sim \hat{G} M^t_{\rm PPP}$, as $t \to 0$, where $M^t_{\rm PPP}$ is $M^t_{\Phi}$  when $\Phi$ is the PPP.  %Thus, the ADG can be used to accurately approximate the  conditional mean SINR conditioned on that the SINR is small. 
\fi

\section{Simulations}
In this section, we present simulation results on a $100 \times 100$ square, where we consider the non-singular path loss model and fix the path loss exponent to $\alpha = 4$ and the intensity of the point processes to $\lambda = 0.1$.  For the MCP, we let $\lambdap = 0.01$, $\bar{c} = 10$ and $\rc = 5$; for the MHP, we let $\lambdab = 0.263$ and $\rh = 1.7$. We present our results in two subsections corresponding to the SINR distribution  and the applications of the ADG. 

\subsection{SINR Distribution}
\subsubsection{Nakagami-$m$ Fading}
In this part, we present simulation results of the outage probability for the PPP, the MCP, and the MHP under Nakagami-$m$ fading. %We perform the simulations on a $100 \times 100$ square and fix the path loss exponent to $\alpha = 4$ and the intensity of the point processes to $\lambda = 0.1$.  For the MCP, we let $\lambdap = 0.01$, $\bar{c} = 10$ and $\rc = 5$; for the MHP, we let $\lambdab = 0.532$ and $\rh = 1$.

\solution
\begin{figure}[hbtp]
\centering
  \includegraphics[width=0.8\figwidth]{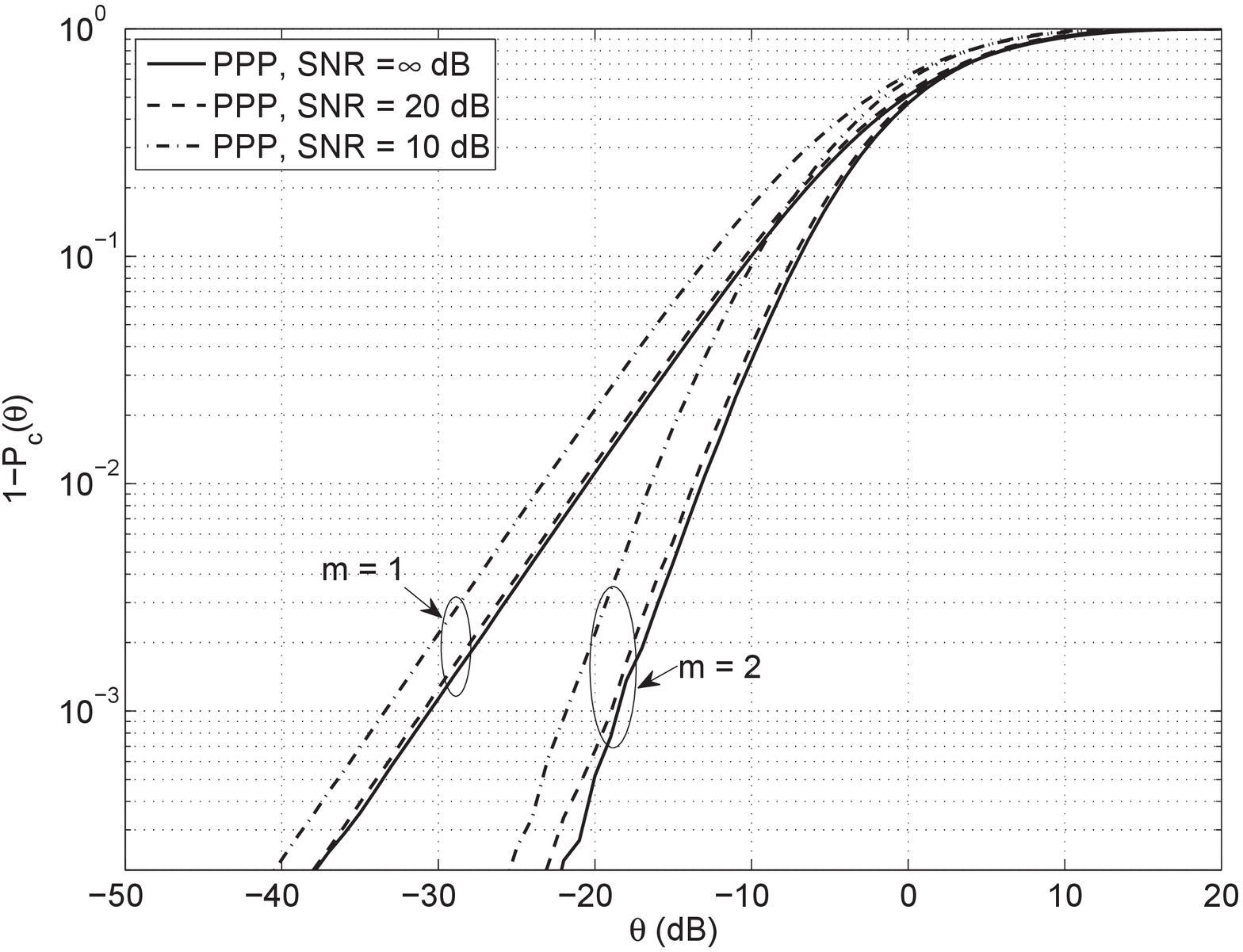} \\ %{Fig1_v5.eps}
  \caption{Nakagami-$m$ fading: the  outage probability $1-\Pc(\theta)$ vs. $\theta$ for the PPP  when $m  \in \{1,2\}$ under different SNR settings.    }
  \label{f1}
\end{figure}
\else
\begin{figure}[hbtp]
\centering
  \includegraphics[width=\figwidth]{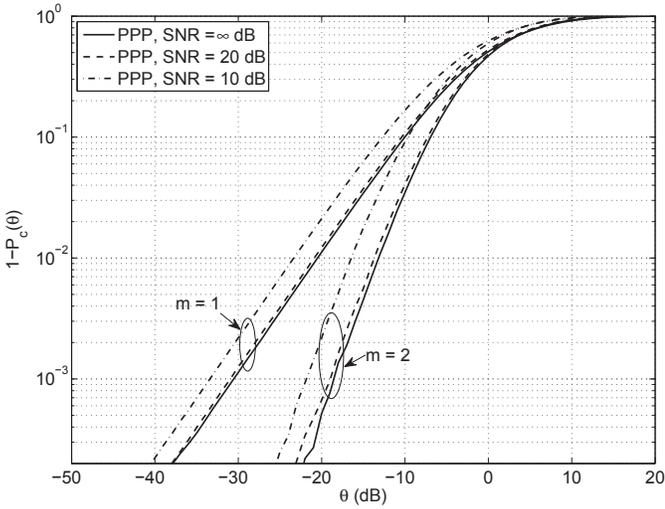} \\ %{Fig1_v5.eps}
  \caption{Nakagami-$m$ fading: the  outage probability $1-\Pc(\theta)$ vs. $\theta$ for the PPP  when $m  \in \{1,2\}$ under different SNR settings.    }
  \label{f1}
\end{figure}
\fi

Fig. \ref{f1} shows the outage curves $1-\Pc(\theta)$ of the PPP for $m \in \{1,2\}$  and different mean SNR values. 
Note that the SNR value here is  %the mean SNR measured at a unit distance from the base station, and is
 $1/(2W)$.
As $\theta$ approaches $0$, the slopes of the curves for $m=1$ are all $10$ dB/decade, and the slopes  for $m=2$ are all $20$ dB/decade, in agreement with Corollary \ref{specialcase}. We also observe that
there is only a rather  small gap between the cases of $\textrm{SNR} = 20$ dB and $\textrm{SNR} = \infty$, thus the thermal noise does not significantly  affect the asymptotic performance of the success probability. We will neglect noise in the rest of this section.

\iffalse % 2-column version needed
\begin{figure}[hbtp]
\centering
  \includegraphics[width=0.8\figwidth]{Fig2_originV66_theta_v2.eps}\\
  \caption{Nakagami-$m$ fading:  the  outage probability  $1-\Pc(\theta)$ vs. $\theta$ for the PPP, the MCP and the MHP when $m \in \{1,2,4\}$ (no noise).   }
  \label{f2}
\end{figure}

\begin{figure}[hbtp]
\centering
  \includegraphics[width=\figwidth]{compound2_theta.eps}\\  %VASE3
  \caption{Compound fading: the  outage probability  $1-\Pc(\theta)$ vs. $\theta$ for the PPP, the MCP and the MHP when $m=1, \sigma = 2$ and $m = 2, \sigma = 4$ (no noise, $\alpha = 4$).
  }\label{compound2}
\end{figure}
\fi

\solution
\begin{figure*}
\hspace{-0.5cm}
  \begin{minipage}[t]{0.02\linewidth}~~~
  \end{minipage}
  \begin{minipage}[t]{0.45\linewidth}
    \centering
\includegraphics[width=0.8\figwidth]{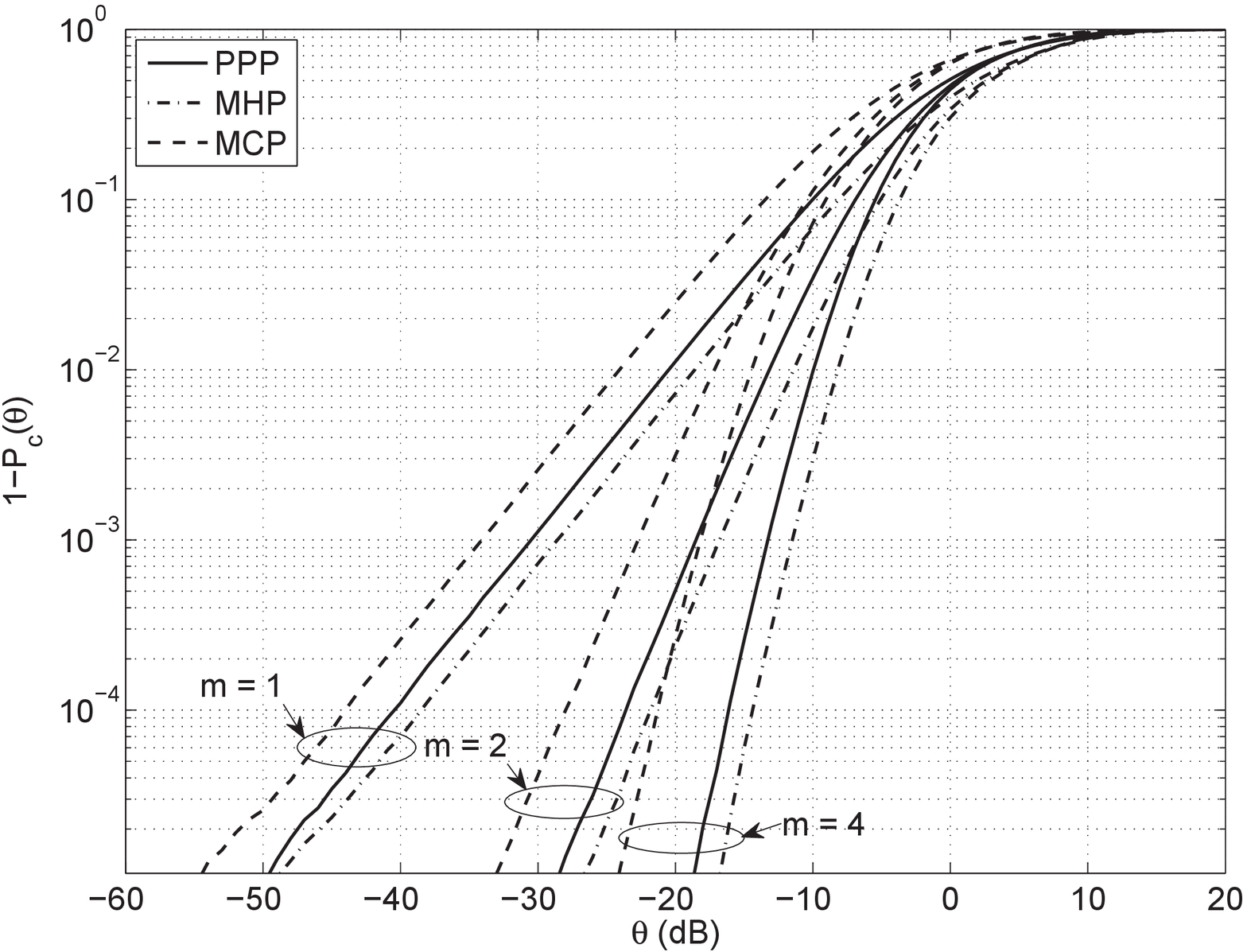}\\
  \caption{Nakagami-$m$ fading:  the  outage probability  $1-\Pc(\theta)$ vs. $\theta$ for the PPP, the MCP and the MHP when $m \in \{1,2,4\}$ (no noise).   }
  \label{f2}
    \end{minipage}
  \begin{minipage}[t]{0.06\linewidth}~
  \end{minipage}
  \begin{minipage}[t]{0.45\linewidth}
    \centering
  \includegraphics[width=0.8\figwidth]{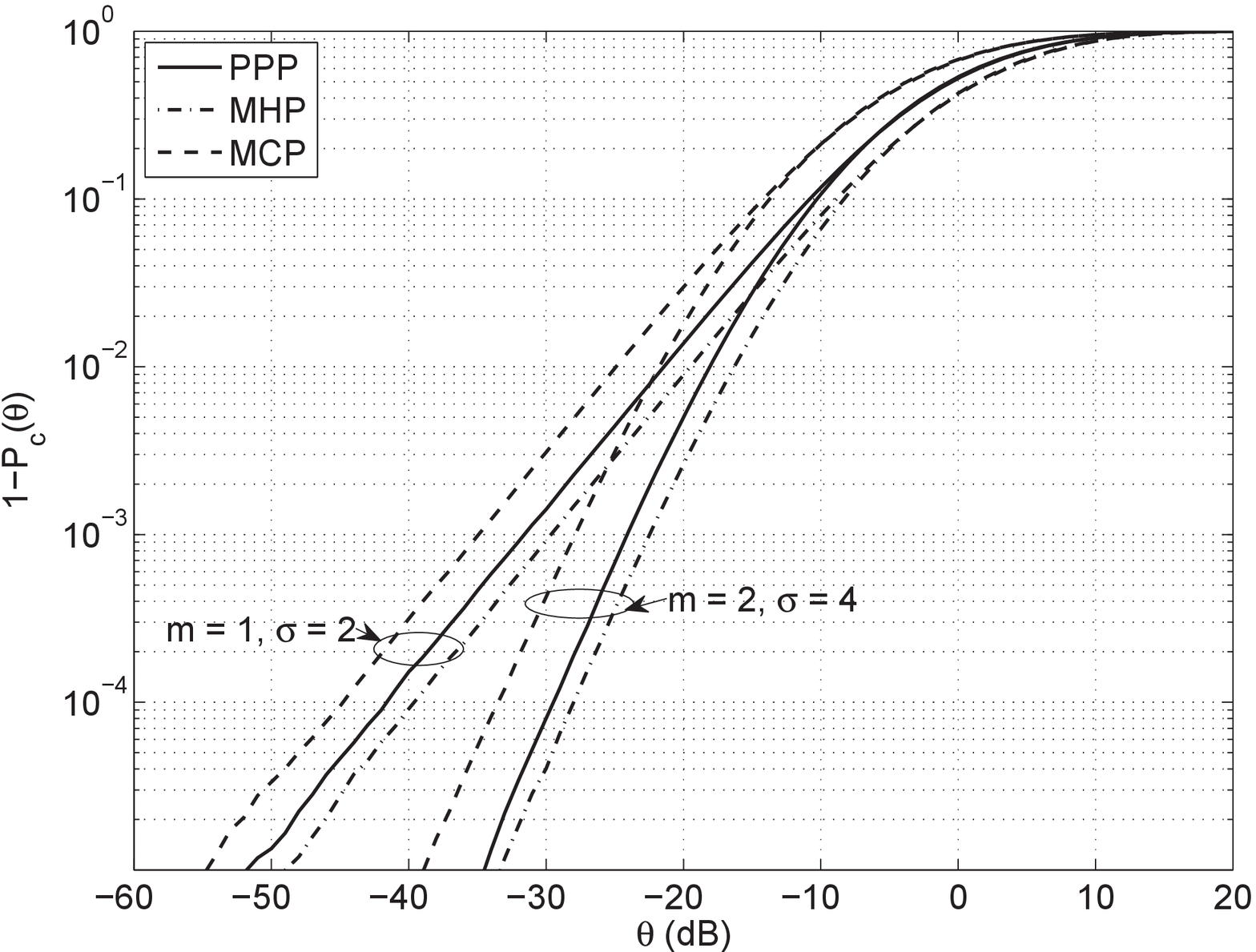}\\  %VASE3
  \caption{Compound fading: the  outage probability  $1-\Pc(\theta)$ vs. $\theta$ for the PPP, the MCP and the MHP when $m=1, \sigma = 2$ and $m = 2, \sigma = 4$ (no noise, $\alpha = 4$).
  }\label{compound2}
  \end{minipage}%
\end{figure*}
\else
\begin{figure}[hbtp]
\centering
\includegraphics[width=\figwidth]{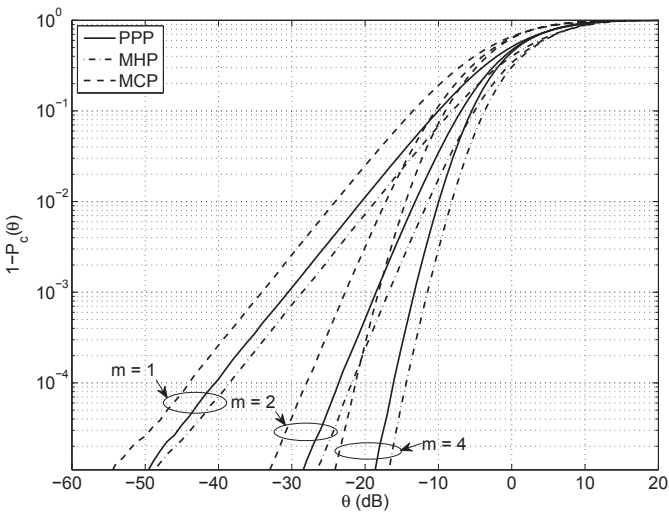}\\
  \caption{Nakagami-$m$ fading:  the  outage probability  $1-\Pc(\theta)$ vs. $\theta$ for the PPP, the MCP and the MHP when $m \in \{1,2,4\}$ (no noise).   }
  \label{f2}
\end{figure}

\begin{figure}[hbtp]
\centering
\includegraphics[width=\figwidth]{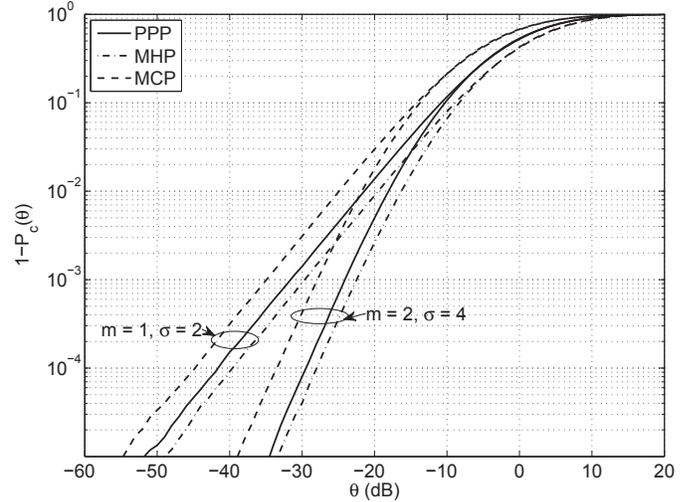}\\  %VASE3
  \caption{Compound fading: the  outage probability  $1-\Pc(\theta)$ vs. $\theta$ for the PPP, the MCP and the MHP when $m=1, \sigma = 2$ and $m = 2, \sigma = 4$ (no noise, $\alpha = 4$).
  }\label{compound2}
\end{figure}
\fi

In Fig. \ref{f2}, we find that for the same point process, a different $m$ implies a different asymptotic slope. In fact, the slope is $10m$ dB/decade, just as Corollary \ref{specialcase}  indicates. For the same $m$, different point processes have the same asymptotic slope, 
thus in the high-reliability regime, the success  probability of a non-Poisson process can be obtained accurately  simply by shifting the success probability curve of the PPP with the same intensity by the ADG. 
 Besides, we observe that for any $m$, the success probability of the MHP is the largest of  the three processes, followed by the PPP and then the MCP. 
 Intuitively, it is because the MHP is more regular than the PPP and the MCP is more clustered than the PPP. 
In addition, since the value of $\kappa$ for the MCP and the MHP can be approximated through the simulation,  by Corollary \ref{Corol2}, we can approximate their ADGs. Denote by $\hat{G}_{m}^{\rm MCP}$  the ADG for the MCP with respect to $m$, and by $\hat{G}_{m}^{\rm MHP}$ that of the MHP.  
We obtain that for the MCP, $\hat{G}_{1}^{\rm MCP}\approx  0.49$, $\hat{G}_{2}^{\rm MCP}\approx  0.37$ and $\hat{G}_{4}^{\rm MCP}\approx  0.29$; for the MHP, $\hat{G}_{1}^{\rm MHP}\approx  1.58$, $\hat{G}_{2}^{\rm MHP}\approx  1.48$ and $\hat{G}_{4}^{\rm MHP}\approx  1.41$. Note that $\hat{G}_{1}^{\rm MCP}$ is consistent with the approximated value 0.49 obtained from Fig. \ref{f0}.

\subsubsection{Composite Fading}
We consider the combination of Nakagami-$m$ fading and log-normal shadowing in this part.     
\iffalse
Fig. \ref{compound1} shows the outage probability for the PPP when $\sigma = 2$ and $m \in \{1,2,4\}$. As Corollary \ref{specialcase} indicates, the case of the composite fading has the same asymptotic property as the case of Nakagami-$m$ fading.  As $\theta \to 0$, the slope of the curve is $10m$ dB/decade.

\begin{figure}[hbtp]
\centering
  \includegraphics[width=\figwidth]{compound1_theta.eps}\\  %VASE3
  \caption{Compound fading: the  outage probability  $1-\Pc(\theta)$ vs. $\theta$ for the PPP when $m \in \{1,2,4\}$  and $\sigma = 2$ (no noise, $\alpha = 4$).
  }\label{compound1}
\end{figure}
\fi
In Fig. \ref{compound2}, the outage probabilities for the PPP, the MCP and the MHP are exhibited. The MHP still has the best outage probability, followed by the PPP and the MCP. We also observe that the value of $\sigma$ does not affect the slope of the outage curve as $\theta \to 0$, which is $10m$ dB/decade.
The ADGs of the MCP and MHP can also be estimated: for  $m = 1$ and $\sigma = 2$, $\hat{G}_{1}^{\rm MCP}\approx  0.51$ and $\hat{G}_{1}^{\rm MHP}\approx  1.55$;  for $m = 2$ and $\sigma = 4$, $\hat{G}_{2}^{\rm MCP}\approx  0.40$ and $\hat{G}_{2}^{\rm MHP}\approx  1.37$.

\subsection{Applications of the ADG}
In this subsection, we evaluate the average ergodic rate and the mean SINR  for the PPP, the MCP and the MHP through simulations, and also estimate them using the ADGs. 
The ADG values are approximated by the DG values at $\pt = 1-10^{-4}$ for the three point processes, which are presented in Table \ref{ADGtable}.

\begin{table}[htdp]
\caption{The ADGs for different $\alpha$  (Rayleigh fading, no noise). \mh{Is this in dB?}}%for the MCP and the MHP}
\begin{center}
\begin{tabular}{|c|c|c|c|c|c|} 
 \hline
ADG    &$\alpha = 2.5$& $\alpha = 3.0 $&$\alpha = 3.5$&$\alpha = 4$&$\alpha = 4.5$  \\ \hline
MCP& $0.46$&$0.40$&$0.41$&$0.49$&$0.42$ \\ \hline
MHP&$1.27$&$1.37$&$1.37$&$1.58$&$1.40$ \\ \hline
\end{tabular}
\end{center}
\label{ADGtable}
\end{table}

\subsubsection{Average Ergodic Rate}
In Fig. \ref{AppOfADG_rate}, the  average ergodic rates $\bar{\gamma}$ for the three point processes  as  a function of $\alpha$ are shown as the lines. We also use the simulation results of the PPP and the ADGs in Table \ref{ADGtable} to estimate the average ergodic rates for the MCP and the MHP.  The estimated values are shown as the markers in Fig. \ref{AppOfADG_rate}. From the figure, we observe that  the average ergodic rates estimated using  the ADGs provide fairly good approximations to the empirical values. We also observe that $\bar{\gamma}$ increases as $\alpha$ grows, which is obvious since the interference decays much faster than the desired signal power. 

\iffalse
\begin{figure}[hbtp]
\centering
  \includegraphics[width=\figwidth]{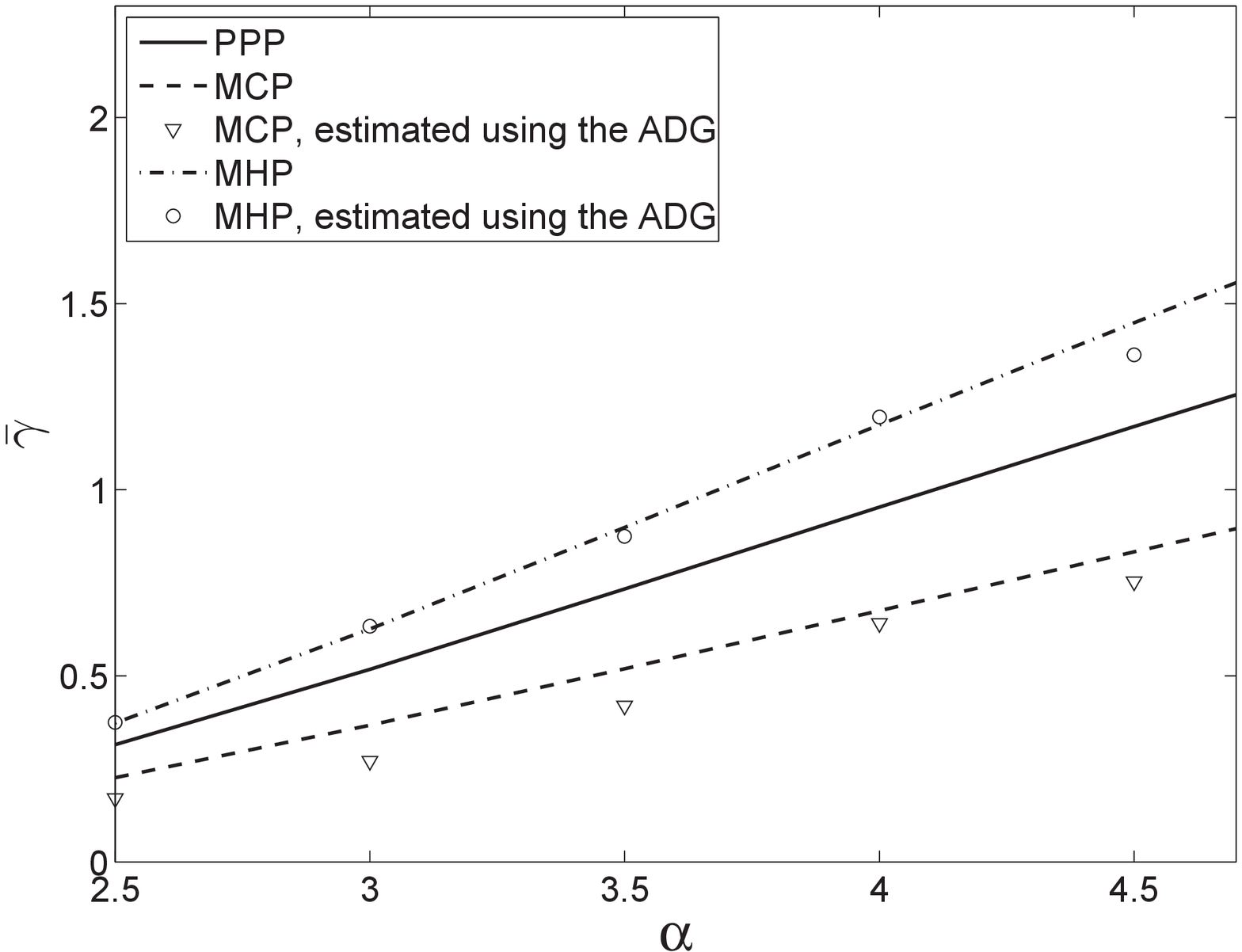}\\
  \caption{The average ergodic rate $\bar{\gamma}$ vs. $\alpha$ for the PPP, the MCP and the MHP.  The lines are the  average ergodic rates obtained directly from simulations, while the markers are the  average ergodic rates estimated using the ADGs. }
  \label{AppOfADG_rate}
\end{figure}
\fi

\subsubsection{Mean SINR}
In Fig. \ref{AppOfADG_meanSINR}, the lines are the mean SINRs for the three point processes as a function of $\alpha$. The markers indicate the mean SINRs for the MCP and the MHP estimated using the simulation results of the PPP and the ADGs. The approximations using the ADGs are acceptable, although not perfect. 
The gaps between the values  estimated using the ADG and the empirical value are  mainly due to the fact that 
the mean is heavily affected by the tail of the CCDF of the SINR, while the ADG approximation is accurate for small and moderate values of $\theta$. 

\iffalse
\begin{figure}[hbtp]
\centering
  \includegraphics[width=\figwidth]{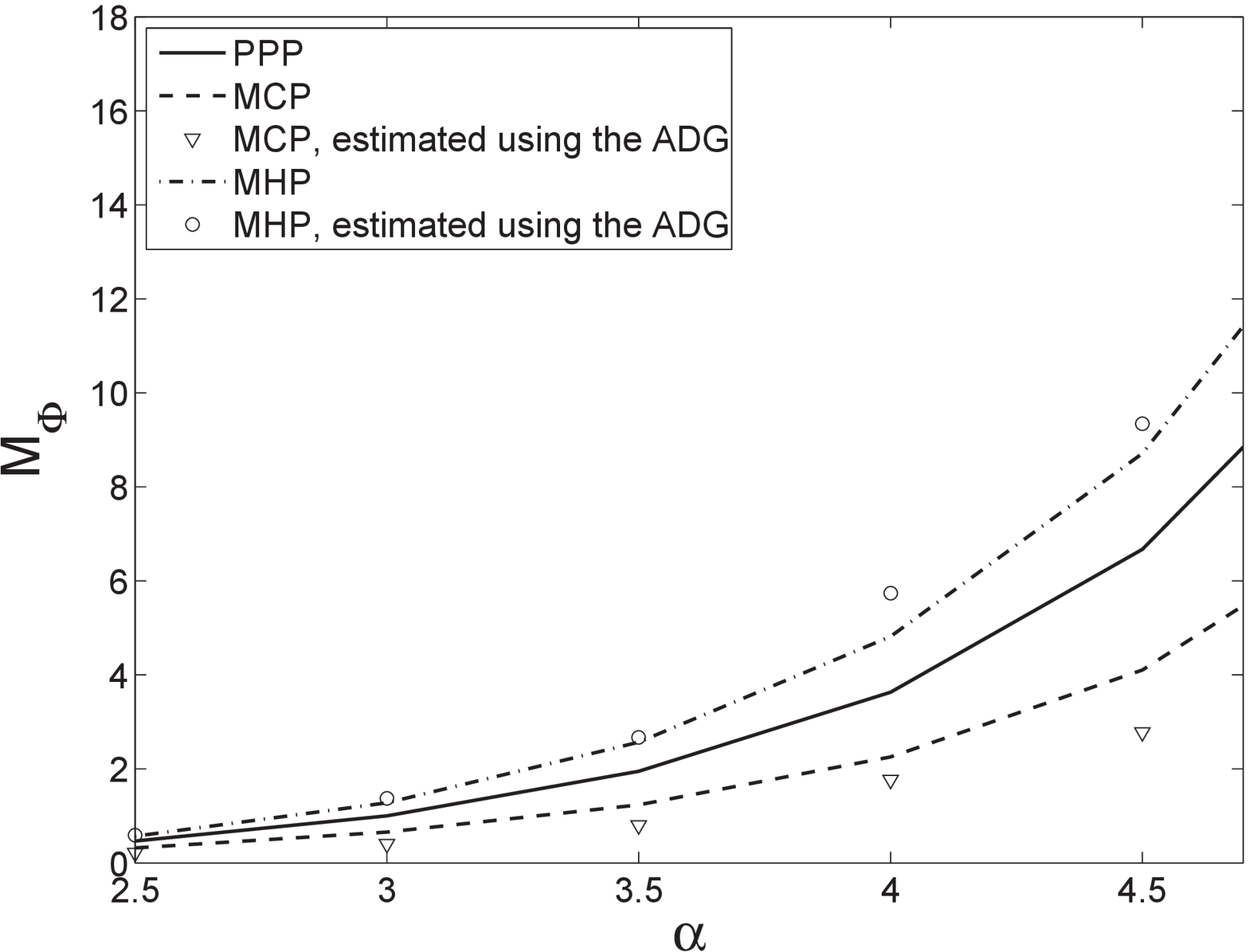}\\
  \caption{The mean SINR $M_{\Phi}$ vs. $\alpha$ for the PPP, the MCP and the MHP.  The lines are the  mean SINRs obtained directly from simulations, while the markers are the  mean SINRs estimated using the ADGs  (i.e., by \eqref{MeanSINReq}). }
  \label{AppOfADG_meanSINR}
\end{figure}
\fi

\solution
\begin{figure*}
\hspace{-0.5cm}
  \begin{minipage}[t]{0.02\linewidth}~~~
  \end{minipage}
  \begin{minipage}[t]{0.45\linewidth}
    \centering
  \includegraphics[width=0.8\figwidth]{AppOfADG_ratev3.eps}\\
  \caption{The average ergodic rate $\bar{\gamma}$ vs. $\alpha$ for the PPP, the MCP and the MHP.  The lines are the  average ergodic rates obtained directly from simulations, while the markers are the  average ergodic rates estimated using the ADGs. }
  \label{AppOfADG_rate}
  \end{minipage}
  \begin{minipage}[t]{0.06\linewidth}~
  \end{minipage}
  \begin{minipage}[t]{0.45\linewidth}
    \centering
   \includegraphics[width=0.8\figwidth]{AppOfADG_meanSINRv3.eps}\\
  \caption{The mean SINR $M_{\Phi}$ vs. $\alpha$ for the PPP, the MCP and the MHP.  The lines are the  mean SINRs obtained directly from simulations, while the markers are the  mean SINRs estimated using the ADGs  (i.e., by \eqref{MeanSINReq}). }
  \label{AppOfADG_meanSINR}
  \end{minipage}%
\end{figure*}
\else
\begin{figure}[hbtp]
\centering
  \includegraphics[width=\figwidth]{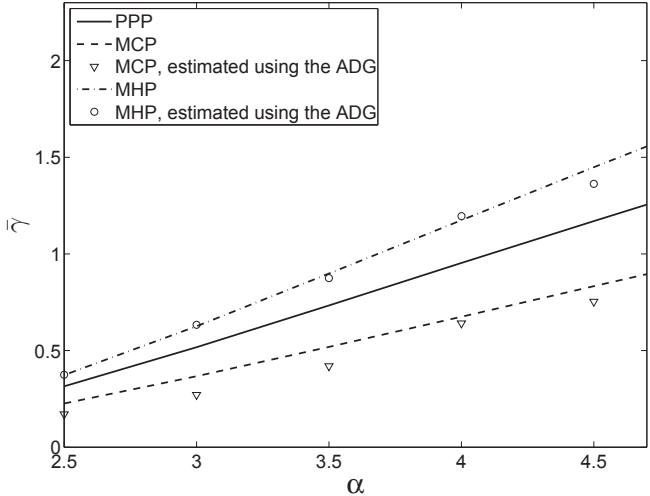}\\
  \caption{The average ergodic rate $\bar{\gamma}$ vs. $\alpha$ for the PPP, the MCP and the MHP.  The lines are the  average ergodic rates obtained directly from simulations, while the markers are the  average ergodic rates estimated using the ADGs. }
  \label{AppOfADG_rate}
\end{figure}

\begin{figure}[hbtp]
\centering
   \includegraphics[width=\figwidth]{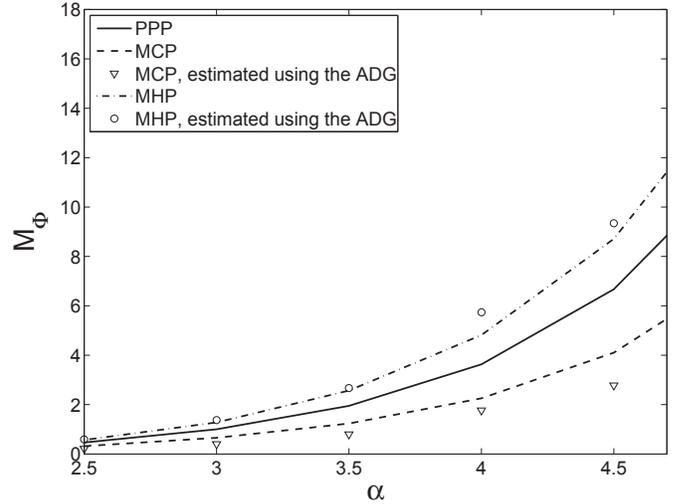}\\
  \caption{The mean SINR $M_{\Phi}$ vs. $\alpha$ for the PPP, the MCP and the MHP.  The lines are the  mean SINRs obtained directly from simulations, while the markers are the  mean SINRs estimated using the ADGs  (i.e., by \eqref{MeanSINReq}). }
  \label{AppOfADG_meanSINR}
\end{figure}
\fi

\section{Conclusions}
In this paper, we examined  the asymptotic properties of the SINR distribution for a variety of motion-invariant  point processes, given some general assumptions on the point process  and general  fading assumptions. %, which are satisfied by Nakagami-$m$ fading and the composite fading. 
The assumptions on the point process are satisfied by many commonly used point processes, e.g. the PPP, the MHP and the MCP. Similarly, the fading assumptions  are satisfied by Nakagami-$m$ fading and composite fading. 
We   proved
that   $1-\Pc(\theta) \sim \kappa \theta^m$, as $\theta \to 0$,   % and for the composite fading, $1-\Pc(\theta) \sim \tilde{\kappa} \theta^m$, as $\theta \to 0$,
which shows that   the ADG exists. % for the two fading types. %and the assumptions of Nakagami-$m$

Under the same system configurations on the fading and path loss, different point processes with the same intensity have different ADGs.
Thus, the ADG can be used as a simple metric to characterize
the success probability. % in the high-reliability regime.
Given the ADG of a point process, we can obtain the precise CCDF of the SINR near 1 by shifting the success probability  curve of the PPP with the same intensity by the ADG (in dB), and numerical studies show that the shifted success probability  curve is highly accurate for all practical success probabilities.

\appendices

\section{Proof of Lemma \ref{lemmaIRDirect}}
\label{sec:appE}
\begin{proof}
We first prove  that
$\forall n \in \mathbb{N}$, there exists a positive $K_0 < \infty$,  s.t. $ \mathbb{E}(\Iy^n) \leq K_0 \mathbb{E}(\IzII^n)$.
Let $\zeta \in \mathbb{R}^2$ and $\|\zeta\| = y$.
According to Def.~\ref{setA}, for $y > y_0$, $\mathbb{P}(\Iy>z) \leq \mathbb{P}(\IzII>z)$,  $\forall z \geq 0$, 
hence  $ \mathbb{E}(\Iy^n) \leq  \mathbb{E}(\IzII^n)$.
For $y \leq y_0$, we have
\begin{align}
  \mathbb{E}(\IzII^n)  &\geq  \mathbb{E}(\IzII^n \mid \Phi^{\zeta}(b(o,y)) = 0) )   \mathbb{P}(\Phi^{\zeta}(b(o,y)) = 0) \nonumber \\
&\stackrel{(a)}{\geq} \mathbb{E}(\Iy^n)\mathbb{P}(\Phi^{\zeta'}(b(o,y_0)) = 0),
\label{fdsas2}
\end{align}
where $\zeta' \in \mathbb{R}^2$, $\|\zeta'\|=y_0$, and $(a)$ holds since $\Phi$ is motion-invariant, $y_0\geq y$ and thus $\mathbb{P}(\Phi^{\zeta'}(b(o,y_0)) = 0) \leq  \mathbb{P}(\Phi^{\zeta}(b(o,y)) = 0)$. The second condition in Def.~\ref{setA} implies that for all $y>0$, $\forall \zeta \in \mathbb{R}^2$ with $\|\zeta\| = y$, $\mathbb{P}(\Phi^{\zeta}(b(o,y))=0 ) \neq 0$.
So, we have $\mathbb{P}(\Phi^{\zeta'}(b(o,y_0)) = 0) \neq 0$,   letting $K_0 = \max\{1, {1}/{\mathbb{P}(\Phi^{\zeta'}(b(o,y_0)) = 0)}\}$, we have
\begin{equation}
 \mathbb{E}(\Iy^n) \leq K_0 \mathbb{E}(\IzII^n). 
\end{equation}

Second, we prove that all moments of $\Iy$ are bounded. 
For $n=1$, by the third  condition in Def.~\ref{setA}, we have
\solution
\begin{align}
\mathbb{E}(\Iy) \leq K_0\mathbb{E}(\IzII) &=  K_0 \mathbb{E}_h\mathbb{E}^{!\zeta}\bigg(\sum_{x\in \Phi \bigcap  B_{\zeta/2}  }h_x \ell(x) \bigg)\nonumber\\
 			 &=K_0\mathbb{E}^{!\zeta}\bigg(\sum_{x\in \Phi \bigcap  B_{\zeta/2}} \mathbb{E}(h_x)\ell(x) \bigg) \nonumber\\
			 &\stackrel{(a)}{=} K_0 \frac{\mathbb{E}(h)}{\lambda}\int_{B_{\zeta/2}} \ell(x) \rho^{(2)}(x-\zeta)dx, 
\label{tmp11}
\end{align}
\else
\begin{eqnarray}
&&\mathbb{E}(\Iy) \leq K_0\mathbb{E}(\IzII)  \nonumber \\
&&=  K_0 \mathbb{E}_h\mathbb{E}^{!\zeta}\bigg(\sum_{x\in \Phi \bigcap  B_{\zeta/2}  }h_x \ell(x) \bigg)\nonumber\\
 &&=K_0\mathbb{E}^{!\zeta}\bigg(\sum_{x\in \Phi \bigcap  B_{\zeta/2}} \mathbb{E}(h_x)\ell(x) \bigg) \nonumber\\
&&\stackrel{(a)}{=} K_0 \frac{\mathbb{E}(h)}{\lambda}\int_{B_{\zeta/2}} \ell(x) \rho^{(2)}(x-\zeta)dx, 
\label{tmp11}
\end{eqnarray}
\fi
where  $\mathbb{E}^{!\zeta}(\cdot)$ is the expectation with respect to the reduced Palm distribution $P^{!\zeta}$, which is the conditional expectation conditioned on $\zeta \in \Phi$ %and that the point $\zeta$ is not included in the expectation.
but excluding $\zeta$. $(a)$ follows from the  Campbell-Mecke theorem.

For $n \geq 2$, we have
\solution
\begin{align}
\mathbb{E}(\Iy^n)  &\leq  K_0 \mathbb{E}_h\mathbb{E}^{!\zeta}\bigg(\sum_{x\in \Phi \bigcap  B_{\zeta/2} } h_x\ell(x)   \bigg)^n\nonumber\\
			     &\stackrel{(a)}{=} K_0\mathbb{E}_h\mathbb{E}^{!\zeta}\bigg[\sum_{x\in \Phi \bigcap  B_{\zeta/2} }\big(h_x \ell(x)\big)^n \bigg]  + K_0 \sum_{k_1+k_2 = n,k_1\geq k_2 >0} {n \choose k_1,k_2} \nonumber\\
			     &\quad \cdot \mathbb{E}_h\mathbb{E}^{!\zeta}\bigg[\sum^{\neq}_{x_1,x_2\in \Phi \bigcap  B_{\zeta/2} }\big(h_{x_1}\ell(x_1) \big)^{k_1}  \big(h_{x_2}\ell(x_2) \big)^{k_2} \bigg]  + \cdots \nonumber\\
			     &\quad   
			     + K_0\sum_{\sum_{j=1}^{n}k_j = n,k_n\geq \cdots \geq k_1 >0} {n \choose k_1,...,k_n} \mathbb{E}_h					 \mathbb{E}^{!\zeta}\bigg[\sum^{\neq}_{x_1,...,x_n\in \Phi \bigcap  B_{\zeta/2} }\prod_{j=1}^{n}\big(h_{x_j} \ell(x_j) \big)^{k_j} \bigg]  \nonumber\\
			     %&\quad+\sum_{\sum_{j=1}^{n}k_j = n,k_n\geq \cdots \geq k_1 >0} {n \choose k_1,...,k_n} \mathbb{E}_h\mathbb{E}^{!\zeta}\bigg[\sum^{\neq}_{x_1,...,x_n\in \Phi^{\zeta} \bigcap  B_{\zeta/2} }\prod_{j=1}^{n}\big(h_{x_j}( 1+\| {x_j}\|^{\alpha})^{-1} \big)^{k_j} \bigg] \nonumber\\ %%%%%%
			     &\stackrel{(b)}{=}  K_0\frac{\mathbb{E}(h^n)}{\lambda}\int_{B_{\zeta/2} }( \ell(x))^{n} \rho^{(2)}(x-\zeta)dx   + \frac{K_0}{\lambda}\sum_{J = 2}^{n}\sum_{\sum_{j=1}^{J}k_j = n,k_J\geq \cdots \geq k_1 >0} {n \choose k_1,...,k_J}
			       \nonumber\\
			     &\quad \cdot \bigg(  \prod_{j=1}^{J}\mathbb{E}(h^{k_j})\bigg) \int_{B_{\zeta/2}}\cdots\int_{B_{\zeta/2}} \prod_{j=1}^{J}( \ell(x_j))^{k_j} \rho^{(J+1)}(x_1-\zeta,...,x_J-\zeta) dx_1...dx_J, %\nonumber\\
			     %\mathbb{E}^{!r}\bigg[\sum^{\neq}_{x_1,...,x_J\in \Phi\bigcap B^{c}(o,r)}\prod_{j=1}^{J}\big(h_{x_j}( 1+\| {x_j}\|^{\alpha})^{-1} \big)^{k_j} \bigg]  \nonumber\\ %%%%%%%
 			% &= \mathbb{E}^{!r}\sum_{x\in \Phi\bigcap B^{c}(o,r)}\big(\mathbb{E}(h_x)( 1+\| x\|^{\alpha})^{-1} \big) \nonumber\\
			% &= \frac{\mathbb{E}(h)}{\lambda}\int_{B^{c}(o,r)}( 1+\| x\|^{\alpha})^{-1} \rho^{(2)}(x)dx.
\label{tmp12}
\end{align}
\else
\begin{eqnarray}
&&\mathbb{E}(\Iy^n)  \leq  K_0 \mathbb{E}_h\mathbb{E}^{!\zeta}\bigg(\sum_{x\in \Phi \bigcap  B_{\zeta/2} } h_x\ell(x)   \bigg)^n\nonumber\\
			     &&\stackrel{(a)}{=} K_0\mathbb{E}_h\mathbb{E}^{!\zeta}\bigg[\sum_{x\in \Phi \bigcap  B_{\zeta/2} }\big(h_x \ell(x)\big)^n \bigg]  \nonumber \\
			     && \quad + K_0 \sum_{k_1+k_2 = n,k_1\geq k_2 >0} {n \choose k_1,k_2} \nonumber\\
			     &&\quad \cdot \mathbb{E}_h\mathbb{E}^{!\zeta}\bigg[\sum^{\neq}_{x_1,x_2\in \Phi \bigcap  B_{\zeta/2} }\big(h_{x_1}\ell(x_1) \big)^{k_1}  \big(h_{x_2}\ell(x_2) \big)^{k_2} \bigg]   \nonumber\\
			     &&\quad  + \cdots 
			     + K_0\sum_{\sum_{j=1}^{n}k_j = n,k_n\geq \cdots \geq k_1 >0} {n \choose k_1,...,k_n} \nonumber \\
			     && \quad  \cdot \mathbb{E}_h					 \mathbb{E}^{!\zeta}\bigg[ \sum^{\neq}_{x_1,...,x_n\in \Phi \bigcap  B_{\zeta/2} }\prod_{j=1}^{n}\big(h_{x_j} \ell(x_j) \big)^{k_j} \bigg]  \nonumber\\
			     %&\quad+\sum_{\sum_{j=1}^{n}k_j = n,k_n\geq \cdots \geq k_1 >0} {n \choose k_1,...,k_n} \mathbb{E}_h\mathbb{E}^{!\zeta}\bigg[\sum^{\neq}_{x_1,...,x_n\in \Phi^{\zeta} \bigcap  B_{\zeta/2} }\prod_{j=1}^{n}\big(h_{x_j}( 1+\| {x_j}\|^{\alpha})^{-1} \big)^{k_j} \bigg] \nonumber\\ %%%%%%
			     &&\stackrel{(b)}{=}  K_0\frac{\mathbb{E}(h^n)}{\lambda}\int_{B_{\zeta/2} }( \ell(x))^{n} \rho^{(2)}(x-\zeta)dx  \nonumber\\
			     && \quad + \frac{K_0}{\lambda}\sum_{J = 2}^{n}\sum_{\sum_{j=1}^{J}k_j = n,k_J\geq \cdots \geq k_1 >0} {n \choose k_1,...,k_J}
			       \nonumber\\
			     &&\quad \cdot \bigg(  \prod_{j=1}^{J}\mathbb{E}(h^{k_j})\bigg) \int_{B_{\zeta/2}}\cdots\int_{B_{\zeta/2}} \prod_{j=1}^{J}( \ell(x_j))^{k_j} \nonumber \\
			     && \quad \cdot \rho^{(J+1)}(x_1-\zeta,...,x_J-\zeta) dx_1...dx_J, %\nonumber\\
			     %\mathbb{E}^{!r}\bigg[\sum^{\neq}_{x_1,...,x_J\in \Phi\bigcap B^{c}(o,r)}\prod_{j=1}^{J}\big(h_{x_j}( 1+\| {x_j}\|^{\alpha})^{-1} \big)^{k_j} \bigg]  \nonumber\\ %%%%%%%
 			% &= \mathbb{E}^{!r}\sum_{x\in \Phi\bigcap B^{c}(o,r)}\big(\mathbb{E}(h_x)( 1+\| x\|^{\alpha})^{-1} \big) \nonumber\\
			% &= \frac{\mathbb{E}(h)}{\lambda}\int_{B^{c}(o,r)}( 1+\| x\|^{\alpha})^{-1} \rho^{(2)}(x)dx.
\label{tmp12}
\end{eqnarray}
\fi
where $(a)$ follows by the multinomial theorem and $(b)$  follows by the  Campbell-Mecke theorem. 

We discuss the cases of the non-singular and singular path loss models, separately. 
For $\ell(x) = ( 1+\| x\|^{\alpha})^{-1}$, when $n = 1$, since by Def.~\ref{setA}, there exists $q_2<\infty$, such that
$\rho^{(2)}(x)< q_2 \textrm{ for } x \in \mathbb{R}^2$, it yields that $\int_{B_{\zeta/2}} \ell(x) \rho^{(2)}(x)dx \leq \int_{\mathbb{R}^2} \ell(x) \rho^{(2)}(x)dx < \infty$ and thus by \eqref{tmp11}, there exists $c_1 \in \mathbb{R}^+$, such that  $\mathbb{E}(\Iy)<c_1$. Similarly, when $n>1$,  by \eqref{tmp12}, there exists $c_n \in \mathbb{R}^+$, such that $\mathbb{E}(\Iy^n) < c_n$, where $c_n$ does not depend on $\zeta$.

For $\ell(x) = \| x\|^{-\alpha}$, when $n = 1$, we have that $\int_{B_{\zeta/2}} \ell(x) \rho^{(2)}(x)dx \leq q_2 \int_{B_{\zeta/2}} \|x\|^{-\alpha}dx = \frac{2\pi q_2}{(\alpha-2)2^{2-\alpha}}\|\zeta\|^{2-\alpha} \leq \frac{2\pi q_2}{(\alpha-2)2^{2-\alpha}} \max\{1,\|\zeta\|^{2-\alpha}\}$, and hence by \eqref{tmp11}, there exists $c_1 \in \mathbb{R}^+$, such that $\mathbb{E}(\Iy)<c_1 \max\{1,\|\zeta\|^{2-\alpha}\}$.  
When $n > 1$,  for $k_j \in \{1,2,...,n\}$,  $\int_{B_{\zeta/2}} (\ell(x))^{k_j}  dx =  \int_{B_{\zeta/2}} \|x\|^{-\alpha {k_j}}dx = \frac{2\pi}{(\alpha{k_j}-2)2^{2-\alpha{k_j}}}\|\zeta\|^{2-\alpha{k_j}}$,  % \leq \frac{2\pi }{(\alpha{k_j}-2)2^{2-\alpha}{k_j}} \max\{1,\|\zeta\|^{2-\alpha{k_j}}\}$,
and therefore $ \int_{B_{\zeta/2}}\cdots\int_{B_{\zeta/2}} \prod_{j=1}^{J}( \ell(x_j))^{k_j} dx_1...dx_J  = (\prod_{j=1}^{J}(\frac{2\pi}{(\alpha{k_j}-2)2^{2-\alpha{k_j}}} ))\|\zeta\|^{2J-\alpha n} $. Further, we have $ \|\zeta\|^{2J-\alpha n} \leq  \max\{1,\|\zeta\|^{2-\alpha n}\} $. Hence, by \eqref{tmp12}, there
 exists $c_n \in \mathbb{R}^+$, such that $\mathbb{E}(\Iy^n)<c_n \max\{1,\|\zeta\|^{2-\alpha n}\} $.

\iffalse
By Def.~\ref{setA}, there exists $q_1<\infty$, such that
$\rho^{(2)}(x)< q_1 \textrm{ for } x \in \mathbb{R}^2$. Therefore, for $\ell(x) = ( 1+\| x\|^{\alpha})^{-1}$, we have that $\int_{B_{\zeta/2}} \ell(x) \rho^{(2)}(x)dx \leq \int_{\mathbb{R}^2} \ell(x) \rho^{(2)}(x)dx < \infty$ and thus  there exists $c_1 \in \mathbb{R}^+$, such that  $\mathbb{E}(\Iy)<c_1$;  
for $\ell(x) = \| x\|^{-\alpha}$, we have that $\int_{B_{\zeta/2}} \ell(x) \rho^{(2)}(x)dx \leq q_1 \int_{B_{\zeta/2}} \|x\|^{-\alpha}dx = \frac{2\pi q_1}{(\alpha-2)2^{2-\alpha}}\|\zeta\|^{2-\alpha} \leq \frac{2\pi q_1}{(\alpha-2)2^{2-\alpha}} \max\{1,\|\zeta\|^{2-\alpha}\}$, and hence there exists $d_1 \in \mathbb{R}^+$, such that $\mathbb{E}(\Iy)<d_1 \max\{1,\|\zeta\|^{2-\alpha}\}$.

Similar to the case of $n=2$, we prove that for $\ell(x) = ( 1+\| x\|^{\alpha})^{-1}$, by Def.~\ref{setA}, there exists $c_n \in \mathbb{R}^+$, such that $\mathbb{E}(\Iy^n) < c_n$.
For $\ell(x) = \| x\|^{-\alpha}$, we have that for $k_j \in \{1,2,...,n\}$,  $\int_{B_{\zeta/2}} (\ell(x))^{k_j}  dx =  \int_{B_{\zeta/2}} \|x\|^{-\alpha {k_j}}dx = \frac{2\pi}{(\alpha{k_j}-2)2^{2-\alpha{k_j}}}\|\zeta\|^{2-\alpha{k_j}}$,  % \leq \frac{2\pi }{(\alpha{k_j}-2)2^{2-\alpha}{k_j}} \max\{1,\|\zeta\|^{2-\alpha{k_j}}\}$,
and thus $ \int_{B_{\zeta/2}}\cdots\int_{B_{\zeta/2}} \prod_{j=1}^{J}( \ell(x_j))^{k_j} dx_1...dx_J  = (\prod_{j=1}^{J}(\frac{2\pi}{(\alpha{k_j}-2)2^{2-\alpha{k_j}}} ))\|\zeta\|^{2J-\alpha n} $. Further, we have $ \|\zeta\|^{2J-\alpha n} \leq  \max\{1,\|\zeta\|^{2-\alpha n}\} $. Hence, there
 exists $d_n \in \mathbb{R}^+$, such that $\mathbb{E}(\Iy^n)<d_n \max\{1,\|\zeta\|^{2-\alpha n}\} $.

\fi
\end{proof}

\section{Proof of Theorem \ref{thmain}}
\label{proof:thmain}
\begin{proof}
We first consider the case when the noise power $W = 0$.  %Assume the typical user is located a the origin $o$ and the BS nearest to it is $\zeta \in \Phi$. So $r = \| \zeta\|$.
Since $\Phi$ is m.i., we can assume $\zeta = (y,0)$.  
Let $\hat{\ell}(x) = 1/\ell(x)$. 
The success probability is
\begin{align}
\Pc(\theta)  & = \mathbb{E}_{\xi}[\mathbb{P}(\textrm{SINR}>\theta \mid {\xi})]   \nonumber \\
& = \int_0^{\infty}\mathbb{P}( h_{\zeta}>\theta\hat{\ell}(\zeta)\Iy  ) f_{\xi}(y) dy \nonumber \\
& = \int_0^{\infty} \mathbb{E}_{\Iy}[F_{h}^{\rm c}(\theta\hat{\ell}(\zeta)\Iy)  ] f_{\xi}(y) dy,
\end{align}
\iffalse
where
 \begin{align}
 \mathbb{P}( h_{\zeta}>\theta\hat{\ell}(\zeta)\Iy   )  = \mathbb{E}_{\Iy}[F_{h}^{\rm c}(\theta\hat{\ell}(\zeta)\Iy)  ]. %   \nonumber \\
\end{align}
\fi
Thus,
 \begin{align}
 \lim_{\theta \to 0} \frac{1-\Pc(\theta)}{\theta^m}
& =  \lim_{\theta \to 0}  \int_0^{\infty}\mathbb{E}_{\Iy}\bigg[ \frac{F_{h}(\theta\hat{\ell}(\zeta)\Iy)}{\theta^m}    \bigg]  f_{\xi}(y) dy.  % \nonumber\\
\label{VIP1}
\end{align}

Assume $G(t) \triangleq  {F_h(t)}/{t^m}$, for $t>0$, and $G(0)  = \lim_{t \to 0}{F_h(t)}/{t^m} = a $.   % ($a \neq 0$) and  $G(0) = a$.
$\forall \epsilon >0$, there exists $\tau > 0$, such that for all $t \in (0,\tau)$, $|G(t)-a|< \epsilon$.
So, $G(t)< a +\epsilon$ for $t \in (0,\tau)$. For $t \geq \tau$, $G(t) =  {F_h(t)}/{t^m}<  \tau^{-m}$. Letting $A = \max \{a+\epsilon,  \tau^{-m} \}$, we have  $G(t)< A$, for all $t\geq0$.

In the following, we discuss the cases of $\ell(x) = ( 1+\| x\|^{\alpha})^{-1}$ and $\ell(x) =  \| x\|^{-\alpha}$, separately. 

For $\ell(x) = ( 1+\| x\|^{\alpha})^{-1}$,
by Lemma \ref{lemmaIRDirect}, we have that    %the interference tail is bounded by an exponential. So,
$\forall n \in \mathbb{N}$, $\exists c_n \in \mathbb{R}^+$, such that $\mathbb{E}(\Iy^n) < c_n$.
It follows that
\solution
 \begin{align}
H(y) &\triangleq \mathbb{E}_{\Iy}\bigg[ \frac{F_{h}(\theta\hat{\ell}(\zeta)\Iy)}{\theta^m}    \bigg]    <  \mathbb{E}_{\Iy} \bigg[ A (\hat{\ell}(\zeta)\Iy)^m    \bigg]    <   A c_m\hat{\ell}(y)^m < +\infty,    % \mathbb{E}_{I(r,\Phi^{\zeta}_o)} [ I(r,\Phi^{\zeta}_o)^m  \mid r=y, I(r,\Phi^{\zeta}_o) ].
 \label{IRwoutW}
\end{align}
\else
\begin{align}
H(y) &\triangleq \mathbb{E}_{\Iy}\bigg[ \frac{F_{h}(\theta\hat{\ell}(\zeta)\Iy)}{\theta^m}    \bigg]    <  \mathbb{E}_{\Iy} \bigg[ A (\hat{\ell}(\zeta)\Iy)^m    \bigg]    \nonumber \\
&<   A c_m\hat{\ell}(y)^m < +\infty,    % \mathbb{E}_{I(r,\Phi^{\zeta}_o)} [ I(r,\Phi^{\zeta}_o)^m  \mid r=y, I(r,\Phi^{\zeta}_o) ].
 \label{IRwoutW}
\end{align}
\fi
\iffalse    %2-column version
 \begin{align}
H(y) &\triangleq \mathbb{E}_{\Iy}\bigg[ \frac{F_{h}(\theta\hat{\ell}(\zeta)\Iy)}{\theta^m}    \bigg]   \nonumber \\
& <  \mathbb{E}_{\Iy} \bigg[ A (\hat{\ell}(\zeta)\Iy)^m    \bigg]   \nonumber \\
& <   A c_m\hat{\ell}(y)^m < +\infty,    % \mathbb{E}_{I(r,\Phi^{\zeta}_o)} [ I(r,\Phi^{\zeta}_o)^m  \mid r=y, I(r,\Phi^{\zeta}_o) ].
 \label{IRwoutW}
\end{align}
\fi
and thus, by the fourth condition in Def.~\ref{setA}, 
 \begin{align}
\int_0^{\infty} H(y)  f_{\xi}(y) dy  % \nonumber \\
& <  A c_m \mathbb{E}_{\xi}\big(\hat{\ell}(\xi)^m \big)   < +\infty.  %\nonumber \\
 \label{IRwoutW2}
\end{align}

For $\ell(x) =  \| x\|^{-\alpha}$, by Lemma \ref{lemmaIRDirect}, we have that $\forall n \in \mathbb{N}$, $\exists d_n \in \mathbb{R}^+$, such that $\mathbb{E}(\Iy^n)<d_n \max\{1,\|\zeta\|^{2-\alpha n}\} $. Therefore, $H(y)<   A y^{\alpha m} d_m \max\{1,y^{2-\alpha m}\}   < +\infty$, and $\int_0^{\infty} H(y)  f_{\xi}(y) dy < A d_m \mathbb{E}_{\xi}\big( {\xi}^{\alpha m} \max\{1,{\xi}^{2-\alpha m}\}  \big)   \leq A d_m \big(\mathbb{E}_{\xi}( {\xi}^{\alpha m}  ) + \mathbb{E}_{\xi}( {\xi}^2 ) \big) < +\infty$.

Assume $\{ \theta_n \}$ is any sequence that converges to 0. Consider  $\ell(x) = ( 1+\| x\|^{\alpha})^{-1}$. Define 
$\tilde{f}(z) \triangleq a (\hat{\ell}(\zeta)z)^m  f_{\Iy}(z)$, and $\tilde{f}_n(z) \triangleq  \frac{F_{h}(\theta_n\hat{\ell}(\zeta)z)}{\theta_n^m} f_{\Iy}(z)$, where $f_{\Iy}(z)$ is the PDF of $\Iy$. $\{\tilde{f}_n\}$ is a sequence of functions and $\tilde{f}_n \to \tilde{f}$, as $n \to \infty$.
Let $g(z) \triangleq A (\hat{\ell}(\zeta)z)^m f_{\Iy}(z) $. We have that $\tilde{f}_n \leq g$, for all $n$, and 
\eqref{IRwoutW} indicates $g(z)$ is integrable.  
By the Dominated Convergence Theorem, we have $\int_0^{\infty} \tilde{f}(z) dz = \lim_{n\to \infty} \int_0^{\infty} \tilde{f}_n(z) dz$.  Similarly, define 
$\hat{f}(y) \triangleq \mathbb{E}_{\Iy}\big[ a (\hat{\ell}(\zeta)\Iy)^m    \big]  f_{\xi}(y)$, $\hat{f}_n(y) \triangleq \mathbb{E}_{\Iy}\big[ \frac{F_{h}(\theta_n\hat{\ell}(\zeta)\Iy)}{\theta_n^m}    \big]  f_{\xi}(y)$ and $\hat{g}(z) \triangleq A c_m\hat{\ell}(y)^m  f_{\xi}(y)$. By the Dominated Convergence Theorem, we have $\int_0^{\infty} \hat{f}(y) dy = \lim_{n\to \infty} \int_0^{\infty} \hat{f}_n(y) dy$. 
By the same reasoning, the Dominated Convergence Theorem can also be applied twice for the case $\ell(x) =  \| x\|^{-\alpha}$. 
Thus, for both cases of $\ell(x)$, we obtain that 
 \begin{align}
 \lim_{\theta \to 0} \frac{1-\Pc(\theta)}{\theta^m}  
& =    \int_0^{\infty}\mathbb{E}_{\Iy}\bigg[\lim_{\theta \to 0}  \frac{F_{h}(\theta\hat{\ell}(\zeta)\Iy)}{\theta^m}     \bigg]   f_{\xi}(y) dy      \nonumber \\
& =    \int_0^{\infty} \mathbb{E}_{\Iy} \bigg[ a  (\hat{\ell}(\zeta)\Iy)^m    \bigg]   f_{\xi}(y) dy. \label{eqfinite}
\end{align}
Note that by \eqref{IRwoutW2}, \eqref{eqfinite} is finite. 

Next,  we consider  the case when $W > 0$.
In \eqref{VIP1}, we only need to replace $\Iy$ with $(\Iy+W)$ in the expectation  $\mathbb{E}_{\Iy}(\cdot)$ 
and the expectation  becomes
\begin{align}
H(y) &= \mathbb{E}_{\Iy}\bigg[ \frac{F_{h}(\theta\hat{\ell}(\zeta)(\Iy+W))}{\theta^m}   \bigg]   \nonumber \\
 &<  \mathbb{E}_{\Iy} \bigg[ A\hat{\ell}(\zeta)^m (\Iy+W)^m    \bigg].
\label{IRwtW}
\end{align}
By expanding $(\Iy+W)^m$, we observe that the right-hand side of (\ref{IRwtW}) is finite.
Analogous to the case when $W = 0$,  we can prove that Theorem \ref{thmain} also holds for $W>0$.
\end{proof}

\section{Proof of Corollary \ref{lemmaIR}}
\label{proof:distrOfI}
\iffalse
\begin{proof}
Due to space constraints we only provide a sketch of the proof. The basic idea is to consider the worst case of the fading, i.e.  $F_{h}^{\rm c}(x) \sim \exp(-ax)$, $x \to \infty$ for some $a>0$ and show that  there exists a $\tau < 0$, such that the Laplace transform of the $\Iy$, denoted by $\mathcal{L}_{\Iy}(s)$, converges for $s>\tau$.  %, where $\tau < 0$.
By Theorem 3 in \cite{Tauberian}, it follows that the interference has an exponential tail.
To show such $\tau$ exists, we first  evaluate $\mathcal{L}_{\Iy}(s)$ for $s<0$ and derive an upper bound that
only depends on  $\Phi^{\zeta}$,    % $\Phi(b(o,\yxi)) = 0$,
so that we can take advantage of the reduced Palm distribution \cite[Ch.~8]{book}.
Using the assumptions on the point process, we can prove that  $\forall s < 0$, there exists a positive $K_s < \infty$,  s.t. $   \mathcal{L}_{\Iy}(s) \leq K_s    \mathcal{L}_{\IzII}(s)$.
Then we only need to show $\mathcal{L}_{\IzII}(s)$ converges for some $\tau_0 < 0$, which can be established using a similar method described in the proof of Theorem 3 in \cite{MIND}.
\end{proof}
\fi

\begin{proof}
Consider the worst case,  $F_{h}^{\rm c}(x) \sim \exp(-ax)$, $x \to \infty$. First, we will show  that the Laplace transform of $\Iy$, denoted by $\mathcal{L}_{\Iy}(s)$, converges for $s>\tau_0$, where $\tau_0 < 0$. % and diverges for $s>\sigma$. $\sigma$ is also called the abscissa of convergence.
Since $\mathcal{L}_{\Iy}(s)$  always converges for $s \geq 0$, we only consider the case $s < 0$.
To prove the property, we need to derive  an upper bound of  $\mathcal{L}_{\Iy}(s)$ that only depends on the $\Phi^{\zeta}$. 
Similar to the proof of Lemma \ref{lemmaIRDirect}, we can prove the proposition that $\forall s < 0$, there exists a positive $\Ks < \infty$,  s.t. $ \mathbb{E}_{\Iy}(\exp(-s\Iy)) \leq \Ks \mathbb{E}_{\IzII}(\exp(-s\IzII))$.
Thus, we have
\begin{align}
\mathcal{L}_{\Iy}(s)  &= \mathbb{E}_{\Iy}(\exp(-s\Iy))  \nonumber\\
&\leq  \Ks \mathbb{E}_{\Phi^{\zeta}, \{h_x\}} \bigg( \prod_{x\in \Phi^{\zeta} \bigcap B_{\zeta/2} \setminus \{\zeta\}} \exp(-sh_x \ell(x))\bigg) \nonumber\\
&=  \Ks \mathbb{E}^{!\zeta} \bigg( \prod_{x\in \Phi\bigcap B_{\zeta/2}}  \mathbb{E}_h\big(\exp(-sh \ell(x))\big)\bigg) \nonumber\\
& =  \Ks \mathbb{E}^{!\zeta} \bigg( \prod_{x\in \Phi\bigcap B_{\zeta/2}}  \mathcal{L}_h(s \ell(x))\bigg),
\label{PGFL-1}
\end{align}
where $\mathcal{L}_h(s)$ denotes the Laplace transform of $h$.  %and $\mathbb{E}^{!\zeta}(\cdot)$ is the expectation with respect to the reduced Palm distribution $P^{!\zeta}$, which is the conditional expectation conditioned on $\zeta \in \Phi$ and that the point $\zeta$ is not included in the expectation.

Let $k(s,x)  \triangleq   \mathcal{L}_h(s \ell(x))$. We have that  $\mathcal{L}_{\IzII}(s) = \mathbb{E}^{!\zeta} \big( \prod_{x\in \Phi\bigcap B_{\zeta/2} } k(s,x) \big) $ is finite if and only if
$$\eta(s) = \mathbb{E}^{!\zeta} \bigg( \sum_{x\in \Phi\bigcap B_{\zeta/2} }  |\log k(s,x)|\bigg) < \infty.$$
Now we show that $\tau_0$ is strictly less than $0$.  %Assume $T(s,x,h) = $
We have
\begin{align}
\eta(s) &= \mathbb{E}^{!\zeta} \bigg( \sum_{x\in \Phi\bigcap B_{\zeta/2} }  |\log k(s,x)|\bigg)\nonumber\\
	%&\stackrel{(a)}{=} \frac{1}{\lambda}\int_{B_{\zeta/2}} \big|\log k(s,x) \big| \rho^{(2)}(\zeta,x)dx \nonumber\\
	&\stackrel{(a)}{=} \frac{1}{\lambda}\int_{B_{\zeta/2}} \big|\log k(s,x) \big| \rho^{(2)}(x-\zeta)dx,
	%&= \frac{1}{\lambda}\int_{B_{\zeta/2}} \bigg|\log\big(\int_0^{\infty} \exp(-sy \| x\|^{-\alpha}) dF_h(y) \big)\bigg| \rho^{(2)}(x-\zeta)dx,
\end{align}
where $(a)$ follows from  the Campbell-Mecke theorem.  

Since $F_{h}^{\rm c}(x) \sim \exp(-ax)$ for large $x$, without  loss of generality, we assume for some large $H_0$, the PDF of $h$ is $f_{\xi}(x) = a\exp(-ax)$ ($x>H_0$).
So,
\solution
\begin{align}
k(s,x) &= \int_0^{\infty} \exp(-sy \ell(x)) dF_h(y)     \nonumber\\
	&= \int_0^{H_0} \exp(-sy \ell(x)) dF_h(y)    + \int_{H_0}^{\infty} a \exp\big(-y(a+s\ell(x))\big) dy.
\label{notuse1}
\end{align}
\else
\begin{align}
k(s,x) &= \int_0^{\infty} \exp(-sy \ell(x)) dF_h(y)     \nonumber\\
	&= \int_0^{H_0} \exp(-sy \ell(x)) dF_h(y)    \nonumber\\
	& \quad + \int_{H_0}^{\infty} a \exp\big(-y(a+s\ell(x))\big) dy.
\label{notuse1}
\end{align}
\fi

Since $x\in  \Phi\bigcap B_{\zeta/2} $, by the Dominated Convergence Theorem, $k(s,x)$ is bounded for all $x$ and $s>-a \ell(\|\zeta\|/2)^{-1}$.  Also, for $s\in(-a\ell(\|\zeta\|/2)^{-1},0)$, we have $k(s,x)>1$ and   $\log(k(s,x))\leq k(s,x)-1$. To show $\eta(s)<\infty$  for $s\in(-a\ell(\|\zeta\|/2)^{-1},0)$, we need to prove   $\int_{B(o,\omega)^{\rm c}} ( k(s,x)-1) \rho^{(2)}(x)dx<\infty$, for large $\omega$. Since for large $\|x\|$, we have $\rho^{(2)}(x-\zeta)\to \lambda^2$, where $\lambda$ is the intensity of $\Phi$, we choose   $\omega$ large enough such that $\rho^{(2)}(x)$ is approximately $\lambda^2$ for all $\|x\|>\omega$.
So we only need to show that $\int_{B(o,\omega)^{\rm c}} ( k(s,x)-1)dx<\infty$. We have
\solution
\begin{align}
&\quad \int_{B(o,\omega)^{\rm c}} ( k(s,x)-1)dx \nonumber\\
&=   \int_{B(o,\omega)^{\rm c}} \int_0^{H_0} (\exp(-s y\ell(x))-1) dF_h(y)  dx  +  \int_{B(o,\omega)^{\rm c}} \int_{H_0}^{\infty} (\exp(-s y \ell(x))-1) dF_h(y)  dx.  \nonumber
	%&= \int_0^{H_0} \exp(-sh \| x\|^{-\alpha}) dF(h) + \int_{H_0}^{\infty} a \exp\big(-h(a+s \| x\|^{-\alpha})\big) dh.
\end{align}
\else
\begin{eqnarray}
&& \int_{B(o,\omega)^{\rm c}} ( k(s,x)-1)dx \nonumber\\
&&=   \int_{B(o,\omega)^{\rm c}} \int_0^{H_0} (\exp(-s y\ell(x))-1) dF_h(y)  dx  \nonumber\\
&&\quad +  \int_{B(o,\omega)^{\rm c}} \int_{H_0}^{\infty} (\exp(-s y \ell(x))-1) dF_h(y)  dx.  \nonumber
	%&= \int_0^{H_0} \exp(-sh \| x\|^{-\alpha}) dF(h) + \int_{H_0}^{\infty} a \exp\big(-h(a+s \| x\|^{-\alpha})\big) dh.
\end{eqnarray}
\fi

For large $\omega$,
\solution
\begin{align}
\quad  \int_{B(o,\omega)^{\rm c}} \int_0^{H_0} (\exp(-s y \ell(x))-1) dF_h(y)  dx   = \int_{B(o,\omega)^{\rm c}} \int_0^{H_0} (-s y \ell(x)) dF_h(y)  dx < \infty,   \nonumber
\end{align}
\else
\begin{eqnarray}
&&\int_{B(o,\omega)^{\rm c}} \int_0^{H_0} (\exp(-s y \ell(x))-1) dF_h(y)  dx   \nonumber\\
&&= \int_{B(o,\omega)^{\rm c}} \int_0^{H_0} (-s y \ell(x)) dF_h(y)  dx < \infty,   \nonumber
\end{eqnarray}
\fi
and
\solution
\begin{align}
&\quad  \int_{B(o,\omega)^{\rm c}} \int_{H_0}^{\infty} (\exp(-s y \ell(x))-1) dF_h(y)  dx   \nonumber\\
&= \exp(-aH_0) \int_{B(o,\omega)^{\rm c}} \bigg( \frac{-s}{a\ell(x)+s} + \frac{a\ell(x) (\exp(-s H_0 \ell(x))-1)}{a\ell(x)+s} \bigg) dx  < \infty,   \nonumber
\end{align}
\else
\begin{eqnarray}
&&  \int_{B(o,\omega)^{\rm c}} \int_{H_0}^{\infty} (\exp(-s y \ell(x))-1) dF_h(y)  dx   \nonumber\\
&&= \exp(-aH_0) \int_{B(o,\omega)^{\rm c}} \bigg( \frac{-s}{a\ell(x)+s} \nonumber\\
&& \quad + \frac{a\ell(x) (\exp(-s H_0 \ell(x))-1)}{a\ell(x)+s} \bigg) dx  < \infty,   \nonumber
\end{eqnarray}
\fi

Thus, $\eta(s)<\infty$ and $\mathcal{L}_{\Iy}(s) < \infty$.  Since $\Iy$ is nonnegative, according the region of convergence (ROC) for Laplace transforms, there exists $\tau < -a\ell(\|\zeta\|/2)^{-1}$, such that $\mathcal{L}_{\Iy}(s)$ converges for $s<\tau$ and diverges for $s>\tau$. $\tau$ is called the abscissa of convergence.
By Theorem 3 in \cite{Tauberian}, it follows that the interference has an exponential tail.
Therefore, if the fading has at most an exponential tail,  the interference tail is bounded by an exponential.
\end{proof}

\section{Proof of Lemma \ref{LemmaCompound}}
\label{sec:appF}
\begin{proof}
Since  $\tilde{h}$ and $\hat{h}$ are independent, we have
\solution
\begin{align}
F_h(t) &= \mathbb{P}(\tilde{h} \hat{h} \leq t) = \int_0^{\infty}  \mathbb{P}(\tilde{h} \leq \frac{t}{u} \mid \hat{h} = u) f_{\hat{h}}(u)du  = \int_0^{\infty} F_{\tilde{h}}\Big(\frac{t}{u}\Big) f_{\hat{h}}(u)du   \nonumber\\
&=  \int_0^{\infty}  \frac{1}{\sqrt{\pi} \Gamma(m)} \bigg(   \int_0^{\frac{mt}{u}} w^{m-1} \exp(-w) dw \bigg) \frac{V_{\sigma}}{u} \exp(-V_{\sigma}^2(\log u)^2) du,
\end{align}
\else
\begin{align}
F_h(t) &= \mathbb{P}(\tilde{h} \hat{h} \leq t) = \int_0^{\infty}  \mathbb{P}(\tilde{h} \leq \frac{t}{u} \mid \hat{h} = u) f_{\hat{h}}(u)du   \nonumber\\
&= \int_0^{\infty} F_{\tilde{h}}\Big(\frac{t}{u}\Big) f_{\hat{h}}(u)du   \nonumber\\
&=  \int_0^{\infty}  \frac{1}{\sqrt{\pi} \Gamma(m)} \bigg(   \int_0^{\frac{mt}{u}} w^{m-1} \exp(-w) dw \bigg)   \nonumber\\
&\quad \cdot \frac{V_{\sigma}}{u} \exp(-V_{\sigma}^2(\log u)^2) du,
\end{align}
\fi
where  $V_{\sigma} \triangleq \frac{10}{\sigma \sqrt{2} \log10}$.

To prove the first property, we have
\solution
\begin{align}
\lim_{t \to 0} \frac{F_h(t)}{t^m} &= \lim_{t \to 0} \frac{F_h'(t)}{m t^{m-1}}  % \nonumber\\ %= \int_0^{\infty}  \mathbb{P}(\tilde{h} \leq \frac{t}{u} \mid \hat{h} = u) f_{\hat{h}}(u)du  \nonumber\\
= \lim_{t \to 0}  \int_0^{\infty}  \frac{V_{\sigma} m^{m-1}}{\sqrt{\pi} \Gamma(m) u^{m+1}} \exp\bigg(-\frac{mt}{u}\bigg)  \exp(-V_{\sigma}^2(\log u)^2) du  \nonumber\\
&\leq  \int_0^{\infty}  \frac{V_{\sigma} m^{m-1}}{\sqrt{\pi} \Gamma(m) u^{m+1}}  \exp(-V_{\sigma}^2(\log u)^2) du.
\label{vbnm1}
\end{align}
\else
\begin{eqnarray}
&&\lim_{t \to 0} \frac{F_h(t)}{t^m} = \lim_{t \to 0} \frac{F_h'(t)}{m t^{m-1}}  \nonumber\\ % \nonumber\\ %= \int_0^{\infty}  \mathbb{P}(\tilde{h} \leq \frac{t}{u} \mid \hat{h} = u) f_{\hat{h}}(u)du  \nonumber\\
&&= \lim_{t \to 0}  \int_0^{\infty}  \frac{V_{\sigma} m^{m-1}}{\sqrt{\pi} \Gamma(m) u^{m+1}} \nonumber\\
&&\quad \cdot \exp\bigg(-\frac{mt}{u}\bigg)  \exp(-V_{\sigma}^2(\log u)^2) du  \nonumber\\
&&\leq  \int_0^{\infty}  \frac{V_{\sigma} m^{m-1}}{\sqrt{\pi} \Gamma(m) u^{m+1}}  \exp(-V_{\sigma}^2(\log u)^2) du.
\label{vbnm1}
\end{eqnarray}
\fi

Since as $u\to 0$, $\exp(-V_{\sigma}^2(\log u)^2) = o(u^n)$ for any $n \in \mathbb{N}$, \eqref{vbnm1} is bounded. Thus we can apply the Dominated Convergence Theorem and obtain the first property. %that

For the second property, on the one hand, for any $n \in \mathbb{N}$,
\solution
\begin{align}
\lim_{t \to \infty}\frac{1-F_h(t)}{t^{-n}} &=  \lim_{t \to \infty}\frac{F_h'(t)}{nt^{-n-1}} =  \lim_{t \to \infty}  \int_0^{\infty}  \frac{V_{\sigma} m^{m} t^{n+m}}{\sqrt{\pi} \Gamma(m) u^{m+1}n}  \exp\Big(-\frac{mt}{u}\Big) \exp(-V_{\sigma}^2(\log u)^2) du.   \nonumber
\end{align}
\else
\begin{eqnarray}
&&\lim_{t \to \infty}\frac{1-F_h(t)}{t^{-n}} =  \lim_{t \to \infty}\frac{F_h'(t)}{nt^{-n-1}} \nonumber\\
&& =  \lim_{t \to \infty}  \int_0^{\infty}  \frac{V_{\sigma} m^{m} t^{n+m}}{\sqrt{\pi} \Gamma(m) u^{m+1}n}  \nonumber\\
&& \quad \cdot \exp\Big(-\frac{mt}{u}\Big) \exp(-V_{\sigma}^2(\log u)^2) du.   \nonumber
\end{eqnarray}
\fi

Assume $H(t) = t^{n+m} \exp\big(-\frac{mt}{u}\big)$. Since $H'(t) = t^{n+m-1} \big(n+m - \frac{mt}{u} \big) \exp\big(-\frac{mt}{u}\big)$, when $t = \frac{u(n+m)}{m}$, $H(t)$ achieves its maximum value and $\max_{t>0} H(t) = \big(\frac{u(n+m)}{m} \big)^{n+m} \exp(-(n+m))$. Thus,
\solution
\begin{align}
\lim_{t \to \infty}\frac{1-F_h(t)}{t^{-n}} &\leq   \int_0^{\infty}   \frac{V_{\sigma} u^{n-1}}{\sqrt{\pi} \Gamma(m) n} \frac{(n+m)^{n+m}}{m^{n}}  \exp(-(n+m)) \exp(-V_{\sigma}^2(\log u)^2) du  < \infty.   \nonumber
\end{align}
\else
\begin{align}
\lim_{t \to \infty}\frac{1-F_h(t)}{t^{-n}} &\leq   \int_0^{\infty}   \frac{V_{\sigma} u^{n-1}}{\sqrt{\pi} \Gamma(m) n} \frac{(n+m)^{n+m}}{m^{n}}  \nonumber\\
& \quad \cdot \exp(-(n+m)) \exp(-V_{\sigma}^2(\log u)^2) du   \nonumber\\
&< \infty.   \nonumber
\end{align}
\fi

Applying the Dominated Convergence Theorem, we obtain $\lim_{t \to \infty}\frac{1-F_h(t)}{t^{-n}} = 0$ and thus $F_h^{\rm c}(t) = o(t^{-n})$, as $t \to \infty$, for any $n \in \mathbb{N}$.

On the other hand,  for any $a>0$, 
\solution
\begin{align}
\lim_{t \to \infty}\frac{1-F_h(t)}{\exp(-at)} &=  \lim_{t \to \infty}\frac{F_h'(t)}{a\exp(-at)} \nonumber\\
&=  \lim_{t \to \infty}  \int_0^{\infty}   \frac{V_{\sigma} m^{m}}{\sqrt{\pi} \Gamma(m) u^{m+1}a} t^{m-1} \exp\Big(\Big(a-\frac{m}{u}\Big)t\Big) \exp(-V_{\sigma}^2(\log u)^2) du.
\end{align}
\else
\begin{eqnarray} 
&&\lim_{t \to \infty}\frac{1-F_h(t)}{\exp(-at)} =  \lim_{t \to \infty}\frac{F_h'(t)}{a\exp(-at)} \nonumber\\
&&=  \lim_{t \to \infty}  \int_0^{\infty}   \frac{V_{\sigma} m^{m}}{\sqrt{\pi} \Gamma(m) u^{m+1}a} t^{m-1} \exp\Big(\Big(a-\frac{m}{u}\Big)t\Big) \nonumber\\
&& \quad \cdot \exp(-V_{\sigma}^2(\log u)^2) du.
\end{eqnarray}
\fi

For any $a>0$, there exists $\hat{K}>0$, such that for $u > \hat{K}$, $\exp(m t/u) < \exp({at}/{3})$. Hence,  $\lim_{t \to \infty}\frac{1-F_h(t)}{\exp(-at)}  = \infty$, for any $a>0$.
Thus, $-\log F_h^{\rm c}(t)  = o(t)$, $t \to \infty$.
\end{proof}

\section{Proof of Lemma \ref{lemma3PP}}
\label{sec:appC}

\begin{proof}
Conditions  1 and 2 in Def.~\ref{setA} hold for all the three point processes obviously. For Conditions 3 and 4, we treat the three point processes separately.

For the PPP, Condition 3 holds, because the points in $\Phi$ are independent; Condition 4 holds, because $\mathbb{P}(\xi > x) = \mathbb{P}(\Phi(b(o,x)) = 0) = \exp(-\lambda \pi x^2)$.

For the MCP,  we first prove that Condition 3 holds. 
For $y > \rc$, the interference $\Iy$ consists of two parts. One is the interference from the clusters with center points inside the region $B(o,y+\rc)\setminus b(o,y-\rc)$, denoted by $I_1$, and the other part is the interference from the clusters with center points in $B(o,y + \rc)^{\rm c}$, denoted by $I_2$. $I_1$ and $I_2$ are independent.
Similarly, $\IzII$ consists of $\hat{I}_1$ and $\hat{I}_2$, where  $\hat{I}_1$ is from the clusters with center points inside   $B(o,y+\rc)\setminus b(o,y/2)$ and  $\hat{I}_2$ is  from the clusters with center points in $B(o,y + \rc)^{\rm c}$.

Since the parent points are independent, $I_2$ and $\hat{I}_2$ have the same distribution. For $y \gg \rc$, we can easily prove that $\hat{I}_1$ stochastically dominates $I_1$. 
As $\mathbb{P}(\Iy>z)  =  \mathbb{P}(I_1 + I_2 >z)     =  \mathbb{E}_{I_2}[\mathbb{P}(I_1 > z-I_2 \mid I_2)]$, we have  $\mathbb{P}(\Iy>z) \leq \mathbb{P}(\IzII>z)$ for all $z\geq 0$.

Then we prove Condition 4 holds for the MCP. For large $y$, let $\mathcal{S}$ be the set of the parent points that are in $B(o,y-\rc)$, i.e., $\mathcal{S} = \{ x \in \Phip: x \in B(o,y-\rc)  \}$ and $\tilde{\Phi}_x$ be the daughter process for the cluster centered at $x \in \Phip$. We have 
\solution
\begin{align}
\mathbb{P}(\xi > y) &=  \mathbb{P}(\Phi(B(o,y)) = 0)   \stackrel{(a)}{\leq} \mathbb{P}(\tilde{\Phi}_x(B(x,\rc)) = 0, \textrm{ for all } x\in \mathcal{S})   \nonumber\\
&=  \sum_{k = 0}^{\infty} \frac{(\lambdap \pi (y-\rc)^2)^k  \exp(-\lambdap \pi (y-\rc)^2)}{k!} \exp(-\bar{c})^k   \nonumber\\
&= \exp\big(- (1-\exp(-\bar{c}))\lambdap \pi (y-\rc)^2\big),
\end{align}
\else
\begin{eqnarray}
&&\mathbb{P}(\xi > y) =  \mathbb{P}(\Phi(B(o,y)) = 0)  \nonumber\\
&&   \stackrel{(a)}{\leq} \mathbb{P}(\tilde{\Phi}_x(B(x,\rc)) = 0, \textrm{ for all } x\in \mathcal{S})   \nonumber\\
&&=  \sum_{k = 0}^{\infty} \frac{(\lambdap \pi (y-\rc)^2)^k  \exp(-\lambdap \pi (y-\rc)^2)}{k!} \exp(-\bar{c})^k   \nonumber\\
&&= \exp\big(- (1-\exp(-\bar{c}))\lambdap \pi (y-\rc)^2\big),
\end{eqnarray}
\fi
\iffalse  %  2-column version
\begin{align}
&\quad \mathbb{P}(\xi > y) =  \mathbb{P}(\Phi(B(o,y)) = 0)    \nonumber\\
&\stackrel{(a)}{\leq} \mathbb{P}(\tilde{\Phi}_x(B(x,\rc)) = 0, \textrm{ for all } x\in \mathcal{S})   \nonumber\\
&=  \sum_{k = 0}^{\infty} \frac{(\lambdap \pi (y-\rc)^2)^k  \exp(-\lambdap \pi (y-\rc)^2)}{k!} \exp(-\bar{c})^k   \nonumber\\
&= \exp\big(- (1-\exp(-\bar{c}))\lambdap \pi (y-\rc)^2\big),
\end{align}
\fi
where $(a)$ follows since $\Phi(B(o,y)) = 0$ implies $\Phi(B(x,\rc)) = 0, \textrm{ for all } x\in \mathcal{S}$.
As $\mathbb{E}(\xi^n) = -\int z^n d\mathbb{P}(\xi>z)$, performing integration by parts, it follows that $\mathbb{E}(\xi^n) $ is bounded. % and Condition 4 holds.

For the MHP, to prove Condition 3,
we consider $\Phi^{\zeta}_o$ and $\Phi^{\zeta}$ in term of the base PPP $\Phib$.
Conditioned on $\Phib \bigcap (B(o,y+2\rh)\setminus B(o,y+\rh))$, the interference from the region $B(o,y+2\rh)^{\rm c}$ in $\Phi^{\zeta}$ and that in $\Phi_o^{\zeta}$ are i.i.d.. So we only need to consider the region $B(o,y+2\rh)$ for large $y$. % if we condition on  $\Phib \bigcap (B(o,y+2\rh)\setminus B(o,y+\rh))$.
As $y$ grows, $\mathbb{E}[\Phi^{\zeta}(B(o,y)\setminus B(o,y/2))] = \Theta(y^2)$, and $\mathbb{E}[ \Phi^{\zeta}_o (B(o,y+2\rh) \setminus B(o,y))] = \Theta(y)$.\footnote{$f(x) = \Theta(g(x))$, if both $f(x)/g(x)$ and $g(x)/f(x)$ remain bounded as $x \to \infty$.} 
It can be proved that  the portion of $\IzII$    that comes from the retained points
 in $B(o,y+2\rh) \setminus B(o,y/2)$
stochastically dominates  the portion of $\Iy$ that comes from the retained points
 in $B(o,y+2\rh) \setminus B(o,y)$.
Hence, Condition 3 holds. 

To prove that Condition 4 holds for the MHP, we use the CCDF of $\xi$  %of the hard-core process $F_{\xi}^{\rm c}(x)$ 
expressed in the form (15.1.5) in \cite{Daley2}:
\solution
 \begin{align}
F^{\rm c}_{\xi}(x) &= \sum_{k=0}^{\infty} \frac{(-1)^k}{k!} \int_{B(o,x)} \cdots \int_{B(o,x)} \rho^{(k)}(y_1,\ldots,y_k)dy_1\cdots dy_k  \label{egDaley0}\\
		&= \sum_{k=0}^{\infty} \frac{(-1)^k}{k!} \alpha^{(k)}[B(o,x)^{\bigotimes_k}],   %(B(o,x))^{(k)}\bigotimes
\label{egDaley}
\end{align}
\else
 \begin{align}
F^{\rm c}_{\xi}(x) &= \sum_{k=0}^{\infty} \frac{(-1)^k}{k!} \int_{B(o,x)} \cdots \int_{B(o,x)} \rho^{(k)}(y_1,\ldots,y_k)  \nonumber\\ 
&\quad \cdot dy_1\cdots dy_k  \label{egDaley0}\\
		&= \sum_{k=0}^{\infty} \frac{(-1)^k}{k!} \alpha^{(k)}[B(o,x)^{\bigotimes_k}],   %(B(o,x))^{(k)}\bigotimes
\label{egDaley}
\end{align}
\fi
where $B(o,x)^{\bigotimes_k}$ is the Cartesian product of $k$ balls and $\alpha^{(k)}$ is the $k$th-order factorial moment measure. For the MHP, the $n$th moment density satisfies 
\begin{equation}
\rho^{(n)}(z_1,\ldots,z_{n}) = \lambda^n, \quad \textrm{ for } (z_1,\ldots,z_{n}) \in S_n(x),
\end{equation}
where $S_n(x) \triangleq \{(z_1,\ldots,z_{n}) \in B(o,x)^{\bigotimes_n}: \|z_i - z_j\| > 2\rh , \forall i \neq j\} $. The complementary set of $S_n(x)$ with respect to $B(o,x)^{\bigotimes_n}$ is $S_n^{\rm c}(x)  = B(o,x)^{\bigotimes_n} \setminus S_n(x)  = \{(z_1,\ldots,z_{n}) \in B(o,x)^{\bigotimes_n}: \exists i \neq j, \textrm{ s.t. } \|z_i - z_j\| \leq 2\rh \}$. %Let $\hat{S}_n^{\rm c} (r) = \{(z_1,\ldots,z_{n}) \in (B(o,r))^{(n)}: \exists i \neq j, \textrm{ s.t. } \|z_i - z_j\| \leq 2\rh \}$. As $r\to \infty$, $\hat{S}_n^{\rm c} (r) \to S_n^{\rm c} $. 
The Lebesgue measure of $S_n^{\rm c}(x)$ satisfies  $\nu(S_n^{\rm c}(x)) = O(x^{2n-1})$. So, as $x\to \infty$, $\int_{S_n^{\rm c}(x)} \rho^{(n)}(y_1,\ldots,y_n)dy_1\cdots dy_n \to 0$. Since \eqref{egDaley0} can be rewritten as 
\solution
$$F^{\rm c}_{\xi}(x) = \sum_{k=0}^{\infty} \frac{(-1)^k}{k!} \bigg(\int_{S_k(x)} \rho^{(k)}(y_1,\ldots,y_k)dy_1\cdots dy_k + \int_{S_k^{\rm c}(x)} \rho^{(k)}(y_1,\ldots,y_k)dy_1\cdots dy_k \bigg),$$
\else
\begin{align}
F^{\rm c}_{\xi}(x) &= \sum_{k=0}^{\infty} \frac{(-1)^k}{k!} \bigg(\int_{S_k(x)} \rho^{(k)}(y_1,\ldots,y_k)dy_1\cdots dy_k \nonumber \\
&\quad  + \int_{S_k^{\rm c}(x)} \rho^{(k)}(y_1,\ldots,y_k)dy_1\cdots dy_k \bigg),
\end{align}
\fi
it follows that as $x \to \infty$, 
\solution
\begin{align}
F^{\rm c}_{\xi}(x) &\sim \sum_{k=0}^{\infty} \frac{(-1)^k}{k!} \bigg( \int_{S_k(x)} \rho^{(k)}(y_1,\ldots,y_k)dy_1\cdots dy_k +\int_{S_k^{\rm c}(x)} \lambda^{k} dy_1\cdots dy_k \bigg)  \nonumber\\
&=  \sum_{k=0}^{\infty} \frac{(-1)^k}{k!} (\lambda \pi x^2)^k = \exp(-\lambda \pi x^2).  \nonumber
\end{align}
\else
\begin{align}
F^{\rm c}_{\xi}(x) &\sim \sum_{k=0}^{\infty} \frac{(-1)^k}{k!} \bigg( \int_{S_k(x)} \rho^{(k)}(y_1,\ldots,y_k)dy_1\cdots dy_k \nonumber\\ 
&\quad +\int_{S_k^{\rm c}(x)} \lambda^{k} dy_1\cdots dy_k \bigg)  \nonumber\\
&=  \sum_{k=0}^{\infty} \frac{(-1)^k}{k!} (\lambda \pi x^2)^k = \exp(-\lambda \pi x^2).  \nonumber
\end{align}
\fi
Therefore, $\mathbb{E}(\xi^n) $ is bounded for all $n$ and Condition 4 holds.

\end{proof}

\end{document}